\documentclass[12pt,a4paper]{article}
\usepackage[utf8]{inputenc}
\usepackage[hidelinks]{hyperref}
\usepackage{comment}
\usepackage{amsthm}
\usepackage[bottom]{footmisc}
\usepackage[papersize={8.3in,11.7in},vmargin={1in,1in},hmargin={1in,1in}]{geometry}
\usepackage{booktabs}
\usepackage{bm} 
\usepackage{mathtools}
\usepackage{natbib}
\setlength{\bibsep}{0pt}
\usepackage{amsfonts}
\usepackage{newtxtext,newtxmath}
\usepackage{setspace}

\theoremstyle{plain}

\newtheorem{theorem}{Theorem}

\newtheorem{lemma}[theorem]{Lemma}

\newtheorem{assumption}{Assumption}
\usepackage{bm} 
\newcommand{\expectation}{\mathbb{E}}
\newcommand{\emexpectation}{\mathop{\mathbb{E}_n}}

\DeclareMathOperator*{\esssup}{ess\,sup}
\usepackage{physics}
\newcommand{\keywords}[1]{%
  \vspace{1em}%
  \noindent\textbf{Keywords:} #1
}
\newcommand{\eref}[1]{\null(\ref{#1})\null}
\usepackage{mathtools}
\DeclarePairedDelimiter{\ceil}{\lceil}{\rceil}

\DeclareMathOperator*{\argmin}{arg\,min} 
\theoremstyle{definition}
\newtheorem{definition}{Definition}
\newtheorem*{example}{Example}
\newtheorem{remark}{Remark}
\title{NEW APPROXIMATION RESULTS AND OPTIMAL ESTIMATION FOR FULLY CONNECTED DEEP NEURAL NETWORKS} 

\author{
  Zhaoji Tang \\
  Department of Economics, University College London 
\thanks{Email: \texttt{zhaoji.tang.23@ucl.ac.uk. I thank Dennis Kristensen and Andrei Zeleneev for their thoughtful suggestions and patient supervision. All errors are my own.}}}
\doublespacing
\begin{document}

\maketitle

\begin{abstract}
\citet{farrell2021deep} establish non-asymptotic high-probability bounds for general deep feedforward neural network (with rectified linear unit activation function) estimators, with \citet[Theorem 1]{farrell2021deep} achieving a suboptimal convergence rate for fully connected feedforward networks. The authors suggest that improved approximation of fully connected networks could yield sharper versions of \citet[Theorem 1]{farrell2021deep} without altering the theoretical framework. By deriving approximation bounds specifically for a narrower fully connected deep neural network, this note demonstrates that \citet[Theorem 1]{farrell2021deep} can be improved to achieve an optimal rate (up to a logarithmic factor). Furthermore, this note briefly shows that deep neural network estimators can mitigate the curse of dimensionality for functions with compositional structure and functions defined on manifolds.
\end{abstract}

\keywords{Deep Neural Networks, Approximation, Rectified Linear Unit, Optimal Estimation.}

\section{Introduction}\label{s1}
Focusing on the rectified linear unit (ReLU) activation function, \citet[Theorem 2]{farrell2021deep} establish non-asymptotic high probability bounds for general deep feedforward neural network estimators. When restricting to fully connected neural networks (multilayer perceptrons), \citet[Theorem 1]{farrell2021deep} achieve a suboptimal convergence rate in the sense of \citet{stone1982optimal}.\footnote{While they achieve optimal convergence rates under specific conditions (see \citet[Corollary 1]{farrell2021deep}), this result applies only to a restricted neural network architecture.}

This work demonstrates that the convergence rate for fully connected networks is in fact optimal (up to a logarithmic factor). This is achieved by solely modifying the approximation results in the proof of \citet[Theorem 1]{farrell2021deep} while leaving the other components of the proof unchanged. The proof of \citet[Theorem 1]{farrell2021deep} unnecessarily increases the complexity of the neural network class by bounding the number of weights in terms of the squared width. By maintaining their core strategy while constructing a narrower approximation network with square root width, we resolve this issue. Using our approximation idea, other works that do not achieve the optimal convergence rate can also attain optimality (e.g., \citet{feng2023over}, \citet{brown2024statistical}, \citet{zhang2024causal}, \citet{colangelo2025double}, \citet{zhang2024classification}, \citet{chronopoulos2023deep}, \citet{jiao2025deep}).

Our proof adapts the approach of \citet[Theorem 1]{yarotsky2017error} and translates it from a wide neural network to a narrower (fully connected) architecture. This narrow architecture is deliberately constructed to resolve the squared width problem, representing a novel contribution to the literature. The key to this construction is a property of ReLU activation functions identified in \citet{schmidt2020nonparametric}---their ability to preserve information flow unchanged. Following \citet{liu2022optimal}, we convert the results to a fully connected network without substantially altering the depth or width.

Unlike the other works that derive convergence rates specifically for fully connected neural networks (e.g., \citet{liu2022optimal}, \citet{schmidt2020nonparametric}, \citet{shen2021deep}), the approach in \citet{farrell2021deep} starts from bounds for general neural network classes and then applies fully connected network approximation results to these general bounds. As noted in \citet{farrell2021deep}, this strategy offers several advantages: it establishes high-probability bounds, accommodates general loss functions, and permits unbounded weights.

Regarding convergence rates, our results align with most of the existing works that contain an extra logarithmic term (e.g., \citet{bauer2019deep} and \citet{schmidt2020nonparametric}), except for \citet{liu2022optimal}.\footnote{\citet{bauer2019deep} focused on smooth activation functions rather than ReLU, whereas the results in \citet{schmidt2020nonparametric} depend on network sparsity.} Their work demonstrates that ReLU feedforward networks can achieve the optimal rate without logarithmic factor terms. While the proof technique from \citet{farrell2021deep} inherently retains these logarithmic terms, this approach provides a trade-off: although yielding a slightly slower convergence rate, it avoids the additional assumptions about independent variables' density function that underlie the log-free results.

An important topic in the theoretical deep learning literature, which is not addressed in \citet{farrell2021deep}, is why deep neural networks often outperform shallow neural networks in practice. Since deep networks are typically employed in high-dimensional settings, a well-known explanation is that, under certain conditions, they can overcome the curse of dimensionality, whereas shallow networks cannot.

The literature in this area can be broadly divided into two strands. First, several works show that deep networks can circumvent the curse of dimensionality for functions with a compositional structure (e.g., \citet{poggio2017and}, \citet{schmidt2020nonparametric}). Second, other studies demonstrate that for functions defined on a low-dimensional manifold, the curse of dimensionality can also be mitigated (e.g., \citet{shaham2018provable}, \citet{kohler2022estimation}). For detailed discussions, we refer the reader to the cited works.

\citet[Chapter 8]{petersen2024mathematical} provides concise approximation results for functions with a compositional structure and for functions on manifolds, with proofs based on \citet[Theorem 1]{yarotsky2017error}. An advantage of \citet[Theorem 2]{farrell2021deep} is that these approximation results can be applied directly to derive convergence rates. However, achieving optimal rates for fully connected networks requires modifying the approximation using our results. For this reason, we briefly present our findings on mitigating the curse of dimensionality in this note.

The rest of the paper is organized as follows. Section \ref{sec:2} introduces the settings and results in \citet{farrell2021deep} and then presents our theoretical results. Section \ref{sec:curse} provides results for mitigating the curse of dimensionality. Section \ref{sec:conclusion} concludes, and the proofs are provided in the Appendix. 

We use the following norms. Let $\mathbf{X} \in \mathbb{R}^d$ be a random vector with sample realizations $\mathbf{x}_i$, and let $\mathbf{x}$ denote a generic realization. For a function $g(\mathbf{x})$,  define
$\norm{g}_\infty \coloneqq \sup_{\mathbf{x}} \abs{g(\mathbf{x})}$,
$\norm{g}_{L_2(X)} \coloneqq \big(\mathbb{E}[g(\mathbf{X})^2]\big)^{1/2}$, and
$\norm{g}_n \coloneqq \big(\emexpectation[g(\mathbf{x}_i)^2]\big)^{1/2}$,
where $\emexpectation[\cdot]$ denotes the sample mean.

\section{Deep neural networks}\label{sec:2}
This section proceeds in two steps. First, we recap the framework and the approach of \citet{farrell2021deep}. Second, we present our alternative method, clarifying both the distinctions between these approaches and why \citet[Theorem 1]{farrell2021deep} do not achieve the optimal convergence rate.
\subsection{Settings and results in \citet{farrell2021deep}}
For random covariates $\mathbf{X}\in \mathbb{R}^d$ and a scalar outcome $Y$, let $\mathbf{Z}=(Y,\mathbf{X}')'\in \mathbb{R}^{d+1}$, with a realization denoted as $\mathbf{z}=(y,\mathbf{x}')'$.

Following \citet{farrell2021deep}, for a loss function $\ell(f,\mathbf{z})$ and a function space $\mathcal{F}$, the function of interest $f_*$ is defined as:
	\begin{equation*}
	f_* \coloneqq \argmin_{f\in \mathcal{F}} \expectation[\ell(f,\textbf{Z})].	
\end{equation*}

 We wish to estimate $f_*$ using ReLU feedforward neural networks. To this end, define the ReLU activation function as:
\begin{equation*}
    \sigma_R(x)\coloneqq \max(x,0).
\end{equation*}

For covariates $\mathbf{X} \in \mathbb{R}^d$, a general feedforward neural network consists of $d$ input units, a number of hidden computation units, and one output unit. Each hidden unit computes a linear combination of its inputs and applies an activation function. The output unit is also a computation unit, but it does not apply an activation function. A key feature of feedforward neural networks is that the hidden units are organized into ordered groups, called layers. Units in a given layer receive input only from the preceding layers and send output only to the subsequent layers, but not vice versa.\footnote{For further details on the general case, we refer readers to \citet{farrell2021deep}.}

We denote the class of general feedforward neural networks by $\mathcal{F}_{\mathrm{DNN}}$, and the loss function satisfies the following assumption.
\begin{assumption}\label{a:loss}
For constants \(c_1, c_2, C_\ell \in \mathbb{R}_+\),
and for any \(f \in \mathcal{F} \cup \mathcal{F}_{\mathrm{DNN}}\) and 
\(g \in \mathcal{F} \cup \mathcal{F}_{\mathrm{DNN}}\),
assume the loss function satisfies
	\begin{equation*} 
		 \abs{\ell(f,\mathbf{z})-\ell(g,\mathbf{z})}\leq C_l \abs{f(\textbf{x})-g(\textbf{x})}.
         \end{equation*}
For \(f_* \in \mathcal{F}\) and for any \(h \in \mathcal{F}_{\mathrm{DNN}}\), assume the curvature condition
         \begin{equation*}
			c_1 \expectation[(h-f_*)^2]\leq \expectation[\ell(h,\textbf{Z})] - \expectation[\ell(f_*, \textbf{Z})]\leq c_2 \expectation[(h-f_*)^2].
	\end{equation*}
\end{assumption}

 The estimator of the general feedforward neural network is defined as:
	\begin{equation}
		\hat{f}_{DNN} \in \argmin_{\substack{f \in \mathcal{F}_{{\mathrm{DNN}}}\\  \norm{f}_\infty\leq 2M}} \sum_{i=1}^n \ell(f,\textbf{z}_i)\label{eq2}.
	\end{equation}

 We will focus on a special type of feedforward neural network, known as a fully connected neural network (MLP). A fully connected feedforward neural network can be represented as follows,
\begin{equation*}
    f_{\mathrm{MLP}}(\mathbf{x}) = \mathbf{W}_{L}  \boldsymbol{\sigma} \left(\cdots \boldsymbol{\sigma}\left(\mathbf{W}_{3} \boldsymbol{\sigma} \left(\mathbf{W}_{2}  \boldsymbol{\sigma} \left(\mathbf{W}_{1}  \boldsymbol{\sigma} \left(\mathbf{W}_{0}\mathbf{x} + \mathbf{b}_{0}\right) + \mathbf{b}_{1}\right) + \mathbf{b}_{2}\right) + \mathbf{b}_{3}\right)+\cdots \right) +\mathbf{b}_L,
\end{equation*}
where $L$ is a positive integer, $l=0,...,L$. $\mathbf{W}_l$ is defined as the $H_{l+1}\times H_l$ matrix, where $H_0=d$, $H_{L+1}=1$, $\mathbf{b}_l$ is the $H_l$-vector and $\boldsymbol{\sigma}:\mathbb{R}^{H_l}\rightarrow\mathbb{R}^{H_l}$ which applies $\sigma_R(\cdot)$ component-wise.

Denote this class by $\mathcal{F}_{\mathrm{MLP}}$.  Following the notation in \citet{farrell2021deep}, let 
\begin{equation}\label{eq:W}
  W  \coloneqq \sum_{l=0}^L \{H_l H_{l+1}+H_{l+1}\}
\end{equation}
denote the total number of weights (also called the size of the network), where $L$ represents the depth and $H$ the largest number of activation functions (referred to as computational nodes or units) $H_l$ for $l=1,...,l$. In practice, to reduce the number of tuning parameters, we choose $H_l$ to be the same for each $l$. We refer to $H$ as the width of the neural network.

Based on equation (\ref{eq:W}), we know that for each MLP, $\exists C\in \mathbb{R}_+$, such that
\begin{equation}
    W \leq C \cdot H^2L.
\end{equation}

\citet{farrell2021deep} make the following sampling assumption.

    \begin{assumption}\label{a:1}
        Assume that $\mathbf{z}_i=(y_i,\mathbf{x}_i')', 1\leq i\leq n$ are i.i.d. copies of $\mathbf{Z}=(Y,\mathbf{X})\in \mathcal{Y}\times[-1,1]^d$, where $\mathbf{X}$ is continuously distributed. For an absolute constant $M>0$, assume $\norm{f_*}_\infty \leq M$ and $\mathcal{Y}\subset [-M,M]$.
    \end{assumption}

Given the settings mentioned above, \citet{farrell2021deep} obtain the general results.
\begin{theorem}[General Feedforward Architecture]
\label{thm:general_architecture}
Suppose $f_*$ lies in a class $\mathcal{F}$. Suppose Assumption \ref{a:1} holds and define
\begin{equation*}
    \epsilon_{\mathrm{DNN}}\coloneqq \sup_{\Tilde{f}\in\mathcal{F}}\inf_{\substack{f \in \mathcal{F}_{{\mathrm{DNN}}}\\  \norm{f}_\infty\leq 2M}} \norm{f-\Tilde{f}}_\infty.
\end{equation*}
Let $\hat{f}_{\mathrm{DNN}}$ be the deep ReLU network estimator defined by equation (\ref{eq2}), for a loss function obeying Assumption \ref{a:loss}. Then with probability at least $1 - e^{-\gamma}$, for $n$ large enough,

\begin{enumerate}
    \item[(a)] $\|\hat{f}_{\mathrm{DNN}} - f_*\|_{L_2(X)}^2 \leq C\left(\frac{WL \log W}{n} \log n + \frac{\log \log n + \gamma}{n} + \epsilon_{\mathrm{DNN}}^2\right)$
    \item[(b)] $E_n[(\hat{f}_{\mathrm{DNN}} - f_*)^2] \leq C\left(\frac{WL \log W}{n} \log n + \frac{\log \log n + \gamma}{n} + \epsilon_{\mathrm{DNN}}^2\right)$
\end{enumerate}

\noindent for a constant $C > 0$ independent of $n$, which may depend on $d, M$, and other fixed constants.
\end{theorem}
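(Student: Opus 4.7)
The plan is to analyze the empirical risk minimizer $\hat{f}_{\mathrm{DNN}}$ via a standard excess risk decomposition combined with a uniform empirical process bound on the ReLU network class. Let $\bar{f}$ nearly attain the infimum $\inf \|f - f_*\|_\infty$ over $\{f \in \mathcal{F}_{\mathrm{DNN}} : \|f\|_\infty \leq 2M\}$, so $\|\bar{f} - f_*\|_\infty \leq \epsilon_{\mathrm{DNN}}$. Starting from the curvature condition in Assumption \ref{a:loss},
\begin{equation*}
c_1 \|\hat{f}_{\mathrm{DNN}} - f_*\|_{L_2(X)}^2 \leq \expectation[\ell(\hat{f}_{\mathrm{DNN}}, \mathbf{Z}) - \ell(f_*, \mathbf{Z})],
\end{equation*}
I would add and subtract the empirical counterparts and use the ERM property of $\hat{f}_{\mathrm{DNN}}$ to replace it by $\bar{f}$ inside the empirical risk. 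This splits the excess risk into (i) a centered empirical process evaluated uniformly over the induced loss class and (ii) a sample-average deviation whose expectation is $O(\epsilon_{\mathrm{DNN}}^2)$ by Lipschitzness of $\ell$ together with the curvature upper bound, and whose fluctuation is $O_P(\epsilon_{\mathrm{DNN}} / \sqrt{n})$ by Hoeffding's inequality; the latter can be absorbed into an $\epsilon_{\mathrm{DNN}}^2 + 1/n$ term.

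The heart of the proof is controlling the uniform empirical process term. I would invoke the pseudo-dimension bound of Bartlett, Harvey, Liaw, and Mehrabian for piecewise linear networks, which yields $\mathrm{Pdim}(\mathcal{F}_{\mathrm{DNN}}) = O(WL \log W)$, and transfer this complexity to the loss-induced class via the Lipschitz contraction property in Assumption \ref{a:loss}. Combined with the uniform bound $\|f\|_\infty \leq 2M$, a standard argument yields $L_2$ covering numbers with $\log N(\delta) \leq C \cdot WL \log W \cdot \log(1/\delta)$. To upgrade the resulting $\sqrt{\text{complexity}/n}$ slow rate into the desired fast rate $WL \log W \log n / n$, I would apply a peeling and localization argument: partition the class into shells $\{f : 2^{k-1} r \leq \|f - f_*\|_{L_2(X)} \leq 2^k r\}$ over a geometric grid, and apply Talagrand's inequality on each shell with variance proxy controlled by $4^k r^2$ via Lipschitzness. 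The curvature inequality then allows the localized process at $\hat{f}_{\mathrm{DNN}}$ to be reabsorbed on the left-hand side, closing a fixed-point inequality that pins $r^2$ at order $WL \log W \log n / n$.

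The $1 - e^{-\gamma}$ high-probability form is delivered by the subgaussian tail in Talagrand's inequality, contributing the additive $\gamma/n$ term; the $\log \log n$ factor is the cost of the union bound over the $O(\log n)$ peeling shells needed to cover $L_2$ radii from order $1/n$ up to the uniform bound $2M$. For part (b), the empirical-norm bound follows by a symmetric argument or, equivalently, by an additional uniform concentration step: on the same high-probability event, $\emexpectation[(f - f_*)^2]$ and $\expectation[(f - f_*)^2]$ agree up to a constant factor for every $f$ in the class whose $L_2$ norm is at or above the noise floor, again via Talagrand applied to the squared class.

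The main technical obstacle is the careful coordination of a localized chaining bound with Talagrand's inequality so that the variance proxy scales with $\|f - \hat{f}_{\mathrm{DNN}}\|_{L_2(X)}^2$ rather than a worst-case constant; this variance-aware concentration, not the cruder Rademacher bound, is precisely what converts the rate from $n^{-1/2}$ to $n^{-1}$ and what ties the $\log n$ factor to the pseudo-dimension. A secondary subtlety is that $\mathcal{F}_{\mathrm{DNN}}$ is defined without any weight-magnitude constraint, so the complexity control must come from pseudo-dimension rather than a norm-based argument, and the $\|f\|_\infty \leq 2M$ restriction in equation (\ref{eq2}) is essential both for the Talagrand variance and for transferring approximation error from $\|\cdot\|_\infty$ through the loss.
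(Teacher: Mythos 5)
Your outline is essentially the proof that this theorem actually has: the paper does not prove Theorem~\ref{thm:general_architecture} itself but imports it verbatim from \citet[Theorem 2]{farrell2021deep}, whose argument is exactly the route you describe --- curvature-based excess-risk decomposition at a near-best approximant, the Bartlett--Harvey--Liaw--Mehrabian pseudo-dimension bound $O(WL\log W)$ transferred through the Lipschitz loss, and a localized (peeling plus variance-aware concentration) argument that closes the fixed-point inequality at $r^2 \asymp WL\log W\log n/n$, with the $\log\log n$ term coming from the union bound over shells and part (b) from uniform norm equivalence on the same event. No gaps of substance; the sketch is consistent with the source proof.
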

One can treat $\frac{WL \log W}{n} \log n$ as the \emph{variance term} and $\epsilon_{\mathrm{DNN}}^2$ as the \emph{bias term}. This exhibits the classical \emph{bias-variance tradeoff}. Furthermore, since these are non-asymptotic bounds, there is an additional term $\frac{\gamma}{n}$ that corresponds to the \emph{confidence level}.

The function space for $f_*$ needed to be specified to derive $\epsilon_\mathrm{DNN}$. The assumption for the function space $\mathcal{F}$ of $f_*$ and the smoothness assumption is as follows.
\begin{assumption}\label{a:2}
		Assume $f_*$ lies in $\mathcal{F}$ which is the Sobolev ball $\mathcal{W}^{\beta,\infty}([-1,1]^d)$, with smoothness $\beta \in \mathbb{N}_+$,
		\begin{equation*}
			\mathcal{W}^{\beta,\infty}([-1,1]^d)\coloneqq \{f: \max_{{\boldsymbol\alpha},\abs{\boldsymbol\alpha}\leq \beta}{\esssup_{\mathbf{x}\in [-1,1]^d}} \abs{D^{\boldsymbol\alpha} f(\mathbf{x})}\leq 1\},
		\end{equation*}
		where $\boldsymbol{\alpha} = (\alpha_1,...,\alpha_d)$, $\abs{\boldsymbol\alpha}=\alpha_1+...+\alpha_d$ and $D^{\boldsymbol\alpha}f$ is the weak derivative.\footnote{Assumption \ref{a:2} requires $f_*\in[-1,1]$, which is not compatible with Assumption \ref{a:1}, where $f_*\in [-M,M]$. However, since $[-M,M]$ is bounded, the results extend directly. To remain consistent with \citet{farrell2021deep}, we follow their assumptions.}
	\end{assumption}

\citet[Theorem 1]{yarotsky2017error} provides an approximation result for a specific neural network architecture. \citet[Corollary 1]{farrell2021deep} applies this result to Theorem \ref{thm:general_architecture} and obtains a nearly optimal convergence rate. However, the result applies only to the particular architecture, and this architecture is difficult to construct in practice.

Directly converting an arbitrary architecture into a fully connected one yields a total number of weights of at least \( W \leq C \cdot H^2 L \). Unfortunately, this transformation generally increases \( W \), thereby breaking the variance-bias tradeoff. To address this issue, one needs to explicitly construct a narrower neural network.

Nevertheless, in \citet[Theorem 1]{farrell2021deep}, the authors convert the specific network constructed by \cite{yarotsky2017error} into a fully connected one using a conservative method (i.e. \citet[Lemma 1]{farrell2021deep}), which allows conversion to $MLP$ for any feedforward neural network. From this method, they derive the following results.
\begin{theorem}[Multilayer Perceptron]
		\label{th:suboptimal}
		Suppose Assumptions \ref{a:1} and \ref{a:2} hold. Let $\hat{f}_{\mathrm{MLP}}$ be the MLP ReLU network estimator defined by equation \eref{eq2}, restricted to $\mathcal{F}_{\mathrm{MLP}}$, for a loss function obeying Assumption \ref{a:loss}, with width $H\asymp n^{\frac{d}{2(\beta+d)}}\log^{2} n$ and depth $L\asymp \log n$. Then with probability at least $1-\exp(-n^{\frac{d}{\beta+d}}\log^{8} n )$, for n large enough,
		\begin{enumerate}
			\item[(a)] $\norm{\hat{f}_{\mathrm{MLP}}-f_*}^2_{L_2(\textbf{X})}\leq C\cdot \{n^{-\frac{\beta}{\beta+d}}\log^{8} n +\frac{\log\log n}{n}\}$ and
			\item[(b)] $\expectation_n[(\hat{f}_{\mathrm{MLP}}-f_*)^2]\leq C\cdot \{n^{-\frac{\beta}{\beta+d}}\log^{8} n +\frac{\log\log n}{n}\},$
		\end{enumerate}
		for a constant $C>0$ independent of $n$, which may depend on $d, \beta, M$, and other fixed constants.
	\end{theorem}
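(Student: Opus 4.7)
The plan is to invoke Theorem \ref{thm:general_architecture} with $\mathcal{F}_{\mathrm{DNN}} = \mathcal{F}_{\mathrm{MLP}}$, obtain an approximation bound $\epsilon_{\mathrm{DNN}}$ by combining \citet[Theorem 1]{yarotsky2017error} with the conservative conversion result \citet[Lemma 1]{farrell2021deep}, and then choose the width $H$ and depth $L$ to balance the resulting bias and variance terms. The proof is thus in three steps: approximation on a sparse feedforward network, re-embedding into an MLP, and optimization over $(H,L)$.

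First, I would apply \citet[Theorem 1]{yarotsky2017error}: for any $\epsilon>0$ and any $f_*\in \mathcal{W}^{\beta,\infty}([-1,1]^d)$, there exists a ReLU feedforward network $\tilde{f}$ of depth $L_Y\asymp \log(1/\epsilon)$ and size $W_Y \asymp \epsilon^{-d/\beta}\log(1/\epsilon)$ with $\|\tilde{f}-f_*\|_\infty \le \epsilon$. Second, I would invoke \citet[Lemma 1]{farrell2021deep} to re-embed $\tilde{f}$ inside $\mathcal{F}_{\mathrm{MLP}}$ with width $H\asymp W_Y$ and depth comparable to $L_Y$; by (\ref{eq:W}) the corresponding MLP size is bounded only by $W_{\mathrm{MLP}}\le C H^2 L$. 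This conservative $H^2L$ bound, rather than something closer to $HL$, is precisely the source of the suboptimality flagged by the author and is the mechanism that inflates the variance term below.

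Third, plugging $\epsilon_{\mathrm{DNN}}\le \epsilon$ and $W_{\mathrm{MLP}}\le CH^2L$ into Theorem \ref{thm:general_architecture}(a) yields a bound of the form
\[
\|\hat{f}_{\mathrm{MLP}}-f_*\|_{L_2(X)}^2 \le C\left(\frac{H^2 L^2 \log(H^2L)}{n}\log n + \frac{\log\log n+\gamma}{n}+\epsilon^2\right).
\]
I would set $L\asymp \log n$ and equate the variance term $\tfrac{H^2 L^2 \log H \cdot \log n}{n}$ with $\epsilon^2$; solving gives $H\asymp n^{d/(2(\beta+d))}\log^{2} n$ and $\epsilon^2 \asymp n^{-\beta/(\beta+d)}\log^{8} n$, matching the statement. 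Finally, choosing the confidence parameter $\gamma = n^{d/(\beta+d)}\log^{8} n$ forces $\gamma/n$ to be of the same order as the variance term, giving the probability guarantee $1-\exp(-n^{d/(\beta+d)}\log^{8} n)$. Part (b) follows identically from Theorem \ref{thm:general_architecture}(b).

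The main obstacle is not conceptual but rather the careful bookkeeping of logarithmic factors: the Yarotsky construction, the conversion lemma, and the variance term each contribute log factors that must be aligned to produce exactly $\log^{8}n$, with the width $H$ inheriting the extra $\log^{2} n$ factor needed to absorb the $\log^{4} n$ excess in the variance term relative to $\epsilon^2$. Conceptually, the argument is a plug-in into Theorem \ref{thm:general_architecture}; the only nontrivial step is that the conservative $W_{\mathrm{MLP}}\le CH^2 L$ bound precludes a sharper rate, motivating the narrower construction that the paper develops afterwards.
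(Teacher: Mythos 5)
Your proposal is correct and follows essentially the same route as the paper, which simply inherits this result from \citet[Theorem 1]{farrell2021deep}: approximate via \citet[Theorem 1]{yarotsky2017error}, convert to an MLP with the conservative \citet[Lemma 1]{farrell2021deep}, plug $W_{\mathrm{MLP}}\le CH^2L$ into Theorem \ref{thm:general_architecture}, and balance. One small bookkeeping correction: the conversion lemma gives $H_{\mathrm{MLP}}\asymp W_Y L_Y$ (not $H\asymp W_Y$), which is exactly where the $\log^{2}n$ factor in $H$ and hence the $\log^{8}n$ in the rate come from; your final exponents are nonetheless right.
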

The choice of $H$ is larger than necessary such that the convergence rate for the MLP estimator is $\mathcal{O}(n^{-\frac{\beta}{\beta+d}}\log^8n)$, which is slower than the optimal convergence rate $\mathcal{O}(n^{-\frac{2\beta}{2\beta+d}})$ implied by \citet{stone1982optimal}.

    \subsection{Our approach}
The optimality of \citet[Corollary 1]{farrell2021deep} indicates that bounds in Theorem \ref{thm:general_architecture} and \citet[Theorem 1]{yarotsky2017error} are sharp. As mentioned above, the main issue arises from the steps taken to convert the approximation neural network in \citet[Theorem 1]{yarotsky2017error} into a fully connected architecture. 

On the one hand, the network in \citet[Theorem 1]{yarotsky2017error} is wide that converting it into a fully connected network inevitably and unnecessarily squares the total number of weights. On the other hand, the conversion method used in \citet{farrell2021deep} is conservative. They apply \citet[Lemma 1]{farrell2021deep} to convert the approximation network into a fully connected one. Their method unnecessarily increases $H_\mathrm{MLP}$ to be asymptotically equivalent to $W$ of the approximation network.

To address this issue, we modify \citet[Theorem 1]{yarotsky2017error} into the following theorem. In the first part, we reconstruct the network in \citet[Theorem 1]{yarotsky2017error} into a narrower architecture; in the second part, we apply Lemma \ref{cor:full} in the Appendix to convert it into a fully connected network.
\begin{theorem}\label{thm:ful}
    For any $\beta,d\in \mathbb{N_+}$ and $\epsilon_{\mathrm{DNN}}\in(0,1)$, 
    \begin{enumerate}
    \item there is a feedforward ReLU network architecture that
    it can approximate any function in $\mathcal{W}^{\beta,\infty}([-1,1]^d)$ with approximation error less than $\epsilon_{\mathrm{DNN}}$, with depth $L \leq c(ln(1/\epsilon_{\mathrm{DNN}})+1)$, and width $H\leq c\epsilon_{\mathrm{DNN}}^{-\frac{d}{2\beta}}$. 
    \item This approximation network can be converted into a fully connected network without significantly change $H,L$ and $W$. The only change is the constant $c$.  Using the fact that $W\leq C\cdot H^2L$ for fully connected neural networks, the total number of parameters satisfies $W_{\mathrm{MLP}}\leq c\epsilon_{\mathrm{DNN}}^{-\frac{d}{\beta}}(\ln(1/\epsilon_{\mathrm{DNN}})+1)$, with some constant $c=c(d,\beta)$.
    \end{enumerate}
\end{theorem}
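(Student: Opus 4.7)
The plan is to follow the skeleton of \citet[Theorem 1]{yarotsky2017error} but to realise the resulting partition-of-unity approximation inside a ReLU network whose width is the square root of the number of Taylor pieces, and then invoke the MLP-conversion lemma from the Appendix. First, I would fix $N \asymp \epsilon_{\mathrm{DNN}}^{-1/\beta}$, place the uniform grid $\{\mathbf{m}/N\}_{\mathbf{m}\in\{0,\ldots,N\}^d}$ on $[-1,1]^d$, and introduce the tensor-product hat partition of unity $\phi_{\mathbf{m}}(\mathbf{x}) = \prod_{k=1}^d \psi_N(x_k - m_k/N)$ together with the degree-$(\beta-1)$ Taylor polynomials $p_{\mathbf{m}}$ at each grid point. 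The standard Sobolev argument yields $\|\,f - \sum_{\mathbf{m}} \phi_{\mathbf{m}} p_{\mathbf{m}}\,\|_\infty = \mathcal{O}(N^{-\beta}) = \mathcal{O}(\epsilon_{\mathrm{DNN}})$, so realising $\sum_{\mathbf{m}} \phi_{\mathbf{m}} p_{\mathbf{m}}$ to precision $\epsilon_{\mathrm{DNN}}$ suffices.

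Second, I would implement the map $\mathbf{x}\mapsto \sum_{\mathbf{m}}\phi_{\mathbf{m}}(\mathbf{x}) p_{\mathbf{m}}(\mathbf{x})$ by a feedforward ReLU network. The $d(N+1)$ univariate hats are built in constant depth; Yarotsky's $x^2$-based approximate multiplication gives each needed product to precision $\mathcal{O}(\epsilon_{\mathrm{DNN}})$ in depth $\mathcal{O}(\ln(1/\epsilon_{\mathrm{DNN}}))$ at constant width. Assembled in the obvious parallel fashion this reproduces Yarotsky's architecture, with width $\sim N^d$ and depth $\mathcal{O}(\ln(1/\epsilon_{\mathrm{DNN}}))$, and it is precisely this parallel layout that I want to replace.

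Third --- the key novel step --- I would reorganise the body of the network to width $\sim N^{d/2}$. Using the identity $x = \sigma_R(x) - \sigma_R(-x)$ identified by \citet{schmidt2020nonparametric}, every scalar activation can be shuttled forward at the uniform price of two neurons per layer. I would precompute the $\mathcal{O}(N)$ shared univariate factors $(x_k - m_k/N)^{j}$ in a "head" block and then split the index set $\{0,\ldots,N\}^d$ into $\sim N^{d/2}$ slabs of size $\sim N^{d/2}$; each slab is handled by a dedicated accumulator line of constant width, so the body block runs $\mathcal{O}(N^{d/2})$ accumulators in parallel. Each approximate multiplication together with its addition into the accumulator must then be embedded so that all slabs progress synchronously through the $\mathcal{O}(\ln(1/\epsilon_{\mathrm{DNN}}))$ layers of Yarotsky's product circuit, with pass-through neurons carrying the still-needed factors and the running partial sums.

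Fourth, I would invoke Lemma \ref{cor:full} from the Appendix (which, following \citet{liu2022optimal}, converts any feedforward ReLU network into an MLP at the price of only constant-factor inflation of $H$, $L$ and $W$) and conclude by applying $W \leq CH^2 L$ for MLPs to obtain the stated $W_{\mathrm{MLP}} \leq c\,\epsilon_{\mathrm{DNN}}^{-d/\beta}(\ln(1/\epsilon_{\mathrm{DNN}})+1)$. The main obstacle is the third step: a naive sequential batching of size-$N^{d/2}$ chunks would inflate the depth by a factor of $N^{d/2}$, and the pass-through identity alone does not prevent this. Designing an explicit synchronous layout in which the multiplication circuits and accumulator updates share the same layers, and then checking that the width and depth bounds hold simultaneously at every layer, is where essentially all the care of the proof is concentrated --- everything else (the Sobolev bound, Yarotsky's product circuit, and the MLP-conversion lemma) is standard.
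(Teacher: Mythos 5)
Your first, second and fourth steps match the paper's proof, but the third step --- the only genuinely new part of the theorem --- is not actually carried out, and the architecture you sketch for it cannot work. You partition the $\sim N^d$ indices $(\mathbf{m},\boldsymbol{\alpha})$ into $\sim N^{d/2}$ arbitrary slabs of size $\sim N^{d/2}$ and assign each slab a constant-width accumulator line. But each slab still contains $\sim N^{d/2}$ genuinely distinct $d$-variate products $\phi_{\mathbf{m}}(\mathbf{x})(\mathbf{x}-\mathbf{m}/N)^{\boldsymbol{\alpha}}$, each costing $\mathcal{O}(\ln(1/\epsilon_{\mathrm{DNN}}))$ layers via the $\tilde{\times}$ circuit; a constant-width line must produce them one after another, so each accumulator has depth $\mathcal{O}(N^{d/2}\ln(1/\epsilon_{\mathrm{DNN}}))$, destroying the claimed bound $L\le c(\ln(1/\epsilon_{\mathrm{DNN}})+1)$ and, through $W\le CH^2L$, the weight bound as well. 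You acknowledge exactly this obstacle in your closing sentences but offer no mechanism to overcome it; a ``synchronous layout'' and the pass-through identity $x=\sigma_R(x)-\sigma_R(-x)$ only let parallel wires share layers, they do not reduce the number of sequential multiplications a constant-width line must perform.

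The missing idea is that the grouping must be chosen so that the terms in a group share almost all of their computation, and the way to achieve this is to split the \emph{coordinates} rather than the index set. Since $\phi_{\mathbf{m}}(\mathbf{x})(\mathbf{x}-\frac{\mathbf{m}}{N})^{\boldsymbol{\alpha}} = \bigl[\prod_{k\le d/2}\psi(3N(x_k-\frac{m_k}{N}))(x_k-\frac{m_k}{N})^{\alpha_k}\bigr]\cdot\bigl[\prod_{k> d/2}\psi(3N(x_k-\frac{m_k}{N}))(x_k-\frac{m_k}{N})^{\alpha_k}\bigr]$, only $\mathcal{O}(N^{d/2})$ distinct half-products $\tilde f_{\mathbf{m}^1,\boldsymbol{\alpha}^1}(\mathbf{x}^1)$ and $\tilde f_{\mathbf{m}^2,\boldsymbol{\alpha}^2}(\mathbf{x}^2)$ ever need to be realized as subnetworks, and they can all be computed in parallel at width $\mathcal{O}(N^{d/2})$ and depth $\mathcal{O}(\ln(1/\delta))$. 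The $N^d$ Taylor coefficients $a_{\mathbf{m},\boldsymbol{\alpha}}$ then enter only as weights of a single linear layer forming, for each $(\mathbf{m}^2,\boldsymbol{\alpha}^2)$, the combination $\sum_{\mathbf{m}^1,\boldsymbol{\alpha}^1}a_{\mathbf{m},\boldsymbol{\alpha}}\tilde f_{\mathbf{m}^1,\boldsymbol{\alpha}^1}(\mathbf{x}^1)$, after which one bank of $\mathcal{O}(N^{d/2})$ parallel $\tilde\times$ circuits finishes the construction; no sequential accumulation occurs anywhere, so the depth stays logarithmic while that one dense layer carries the $\mathcal{O}(N^d)$ weights. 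The paper also needs separate, more delicate constructions for odd $d$ (grouping the $2N+1$ grid values of the middle coordinate into $\mathcal{O}(\sqrt{N})$ blocks) and for $d=1$, neither of which your proposal addresses. As written, your proof correctly locates the difficulty but does not contain the construction that resolves it.
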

\begin{remark}
    In \citet[Theorem 1]{yarotsky2017error}, the neural network was constructed via a linear combination of \( \mathcal{O}(\epsilon_{\mathrm{DNN}}^{-\frac{d}{\beta}}) \) subnetworks, requiring a width of \( \mathcal{O}(\epsilon_{\mathrm{DNN}}^{-\frac{d}{\beta}}) \). A direct fully connected implementation (where \( W \leq C \cdot H^2 L \)) would result in the number of weights at least \( \mathcal{O}(\epsilon_{\mathrm{DNN}}^{-\frac{2d}{\beta}}L)  \). This is the first reason why the rate established in Theorem \ref{th:suboptimal} is suboptimal. The first point solves this by explicitly constructing a narrower network with equivalent performance.
\end{remark}
\begin{remark}
    A narrower construction alone is not sufficient to solve the problem. Specifically, \citet[Lemma 1]{farrell2021deep} says $H_{\mathrm{MLP}}\asymp WL$, where $H_{\mathrm{MLP}}$ is the width of the converted MLP and $W,L$ are the total number of weights and depth of the original approximation network.\footnote{In fact, \citet[Lemma 1]{farrell2021deep} can be strengthened so that the width of the resulting fully connected network is bounded above by \( HL \), where \( H \) and \( L \) are the width and depth of the original architecture. However, this refinement still causes an undesirable expansion of \( H \) by an additional factor of \( L \).}  Since we maintain the same number of weights as in \citet[Theorem 1]{yarotsky2017error}, even with our narrower construction, applying \citet[Lemma 1]{farrell2021deep} would increase the width $H_{\mathrm{MLP}}\asymp  WL$, effectively squaring it. This is the second reason that Theorem \ref{th:suboptimal} is suboptimal.
\end{remark}
The proof is provided in Appendix \ref{proof:modified} and proceeds in two steps. The first step is the same as in the original proof from Yarotsky (2017), while the second step is a reconstruction of the approximating neural network. Although the overall strategy retains the partition of unity and Taylor expansion approach, we introduce significant structural refinements to reduce the network width while maintaining the same order of the coefficients. 

Compared with \citet[Theorem 1]{yarotsky2017error}, Theorem \ref{thm:ful} attains the same order of tue weights and depth. The only difference is that our construction achieves width $\mathcal{O}(\epsilon_{\mathrm{DNN}}^{-d/(2\beta)})$ instead of $\mathcal{O}(\epsilon_{\mathrm{DNN}}^{-d/\beta})$. In addition, we provide an explicit conversion to fully connected neural networks.

Choosing $\epsilon_{\mathrm{MLP}}=n^{-\frac{\beta}{2\beta+d}}, \gamma=n^{\frac{d}{2\beta+d}}\log^{4} n$, and combining Theorem \ref{thm:ful} with Theorem \ref{thm:general_architecture}, we obtain the following results for MLP estimators.
	\begin{theorem}[Optimal Multilayer Perceptron]
		\label{th:1}
		Suppose Assumtpions \ref{a:1} and \ref{a:2} hold. Let $\hat{f}_{\mathrm{MLP}}$ be the MLP ReLU network estimator defined by equation \eref{eq2}, for a loss function obeying Assumption \ref{a:loss}, with width $H\asymp n^{\frac{d}{4\beta+2d}}$ and depth $L\asymp \log n$. Then with probability at least $1-\exp(-n^{\frac{d}{2\beta+d}}\log^{4} n )$, for n large enough,\footnote{The logarithmic term could potentially be improved, but for simplicity we maintain the same log term as in \citet[Corollary 1]{farrell2021deep}.}
		\begin{enumerate}
			\item[(a)] $\norm{\hat{f}_{\mathrm{MLP}}-f_*}^2_{L_2(\textbf{X})}\leq C\cdot \{n^{-\frac{2\beta}{2\beta+d}}\log^{4} n +\frac{\log\log n}{n}\}$ and
			\item[(b)] $\expectation_n[(\hat{f}_{\mathrm{MLP}}-f_*)^2]\leq C\cdot \{n^{-\frac{2\beta}{2\beta+d}}\log^{4} n +\frac{\log\log n}{n}\},$
		\end{enumerate}
		for a constant $C>0$ independent of $n$, which may depend on $d, \beta, M$, and other fixed constants.
	\end{theorem}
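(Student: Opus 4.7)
The plan is to derive Theorem \ref{th:1} as a direct corollary of Theorems \ref{thm:general_architecture} and \ref{thm:ful} via the optimal bias-variance trade-off. By Assumption \ref{a:2}, $f_* \in \mathcal{W}^{\beta,\infty}([-1,1]^d)$, so part (2) of Theorem \ref{thm:ful} provides, for any $\epsilon_{\mathrm{MLP}} \in (0,1)$, an MLP $\tilde{f} \in \mathcal{F}_{\mathrm{MLP}}$ with $\norm{\tilde{f} - f_*}_\infty \leq \epsilon_{\mathrm{MLP}}$, depth $L \leq c(\ln(1/\epsilon_{\mathrm{MLP}}) + 1)$, width $H \leq c\,\epsilon_{\mathrm{MLP}}^{-d/(2\beta)}$, and weight count $W_{\mathrm{MLP}} \leq c\,\epsilon_{\mathrm{MLP}}^{-d/\beta}(\ln(1/\epsilon_{\mathrm{MLP}}) + 1)$. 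For $n$ large enough one has $\epsilon_{\mathrm{MLP}} \leq M$, so the triangle inequality combined with Assumption \ref{a:1} gives $\norm{\tilde{f}}_\infty \leq 2M$; hence $\tilde{f}$ is admissible in the infimum defining $\epsilon_{\mathrm{DNN}}$ with $\mathcal{F}_{\mathrm{DNN}}$ specialized to $\mathcal{F}_{\mathrm{MLP}}$, and $\epsilon_{\mathrm{DNN}} \leq \epsilon_{\mathrm{MLP}}$.

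Next I would set $\epsilon_{\mathrm{MLP}} = n^{-\beta/(2\beta+d)}$ and $\gamma = n^{d/(2\beta+d)}\log^4 n$, as suggested immediately before the theorem statement. This makes $\ln(1/\epsilon_{\mathrm{MLP}}) \asymp \log n$, producing the architecture $H \asymp n^{d/(4\beta+2d)}$, $L \asymp \log n$, and $W_{\mathrm{MLP}} \leq C \cdot n^{d/(2\beta+d)}\log n$ (so that $\log W_{\mathrm{MLP}} \leq C' \log n$), matching the statement. Plugging these quantities into Theorem \ref{thm:general_architecture} applied to $\mathcal{F}_{\mathrm{DNN}} = \mathcal{F}_{\mathrm{MLP}}$, each of the three terms on the right-hand side collapses to the common rate $n^{-2\beta/(2\beta+d)}\log^4 n$ (the squared bias $\epsilon_{\mathrm{MLP}}^2$ is even slightly smaller), and the probability guarantee becomes $1 - \exp(-\gamma) = 1 - \exp(-n^{d/(2\beta+d)}\log^4 n)$. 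The residual $\frac{\log\log n}{n}$ term is retained explicitly in the statement.

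The main obstacle, and essentially the only genuine calculation, is carefully tracking logarithmic factors: the variance term $\frac{W_{\mathrm{MLP}} L \log W_{\mathrm{MLP}}}{n}\log n$ picks up exactly four factors of $\log n$ after substitution (one each from $L$, from $\log W_{\mathrm{MLP}}$, from the outer $\log n$, and from the $\log n$ inside $W_{\mathrm{MLP}}$), which is precisely what is needed to match $\log^4 n$ and, by the balanced choice of $\gamma$, the confidence contribution $\gamma/n$. A secondary check is that the ``$n$ large enough'' hypotheses of both theorems are jointly compatible with $\epsilon_{\mathrm{MLP}} \in (0,1)$ and $\epsilon_{\mathrm{MLP}} \leq M$; this is automatic since $\epsilon_{\mathrm{MLP}} \to 0$. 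No further approximation or empirical-process work is required, as all substantive analysis has been absorbed into the two theorems being combined.
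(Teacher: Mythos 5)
Your proof is correct and takes exactly the route the paper intends: the paper itself only states the one-sentence recipe (``Choosing $\epsilon_{\mathrm{MLP}}=n^{-\beta/(2\beta+d)}$, $\gamma=n^{d/(2\beta+d)}\log^4 n$, and combining Theorem~\ref{thm:ful} with Theorem~\ref{thm:general_architecture}'') and leaves the substitution implicit, which you have carried out accurately, including the count of four logarithmic factors in $\frac{W_{\mathrm{MLP}} L \log W_{\mathrm{MLP}}}{n}\log n$ and the verification that the admissibility constraint $\|\tilde f\|_\infty \le 2M$ holds for $n$ large.
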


Here, we obtain a convergence rate (up to logarithmic factors) of $n^{-\frac{2\beta}{2\beta+d}}$, rather than $n^{-\frac{\beta}{\beta+d}}$. This is the optimal rate in the sense of \citet{stone1982optimal}. 

This result contributes to the strand of literature building on the framework of \citet{farrell2021deep} (e.g., \citet{brown2024statistical}, \citet{zhang2024causal}, \citet{zhang2024classification}, \citet{colangelo2025double}). Moreover, by leveraging our approximation result, other studies that build on \citet{yarotsky2017error} can also attain the optimal convergence rate for statistical inference (e.g., \citet{feng2023over}, \citet{jiao2025deep}). Our approximation construction may also be useful for approximating other deep neural network architectures.

\section{Mitigating the curse of dimensionality}\label{sec:curse}
This section presents concise results illustrating scenarios in which deep neural networks can mitigate the curse of dimensionality. The results are obtained by directly applying the approximation results to Theorem \ref{thm:general_architecture}. The approximation results used here are adapted from \citet[Chapter 8]{petersen2024mathematical}.\footnote{Although \citet[Chapter 8]{petersen2024mathematical} derive approximation results on H\"older spaces, their arguments rely on a variant of \citet{yarotsky2017error}. Hence, the same reasoning directly extends to Sobolev spaces. Consequently, the results in this note also apply to H\"older spaces.} The key difference is that we convert the approximation networks into narrower and fully connected architectures using Theorem \ref{thm:ful}. Subsection \ref{sec:composite} addresses functions with compositional structures, while Subsection \ref{sec:manifold} focuses on functions defined on manifolds.
\subsection{Functions with compositional structure}\label{sec:composite}
We modify Assumption \ref{a:2} to the following compositional function assumption, based on \citet[Chapter 8.2]{petersen2024mathematical}.

Let $\mathcal{A}$ be a directed acyclic graph with $T$ vertices $\eta_1,\dots,\eta_T$ satisfying the following: the first $d$ vertices $\eta_1,\dots,\eta_d$ have no incoming edges; each vertex has at most $d_* \in \mathbb{N}$ incoming edges; and the final vertex $\eta_T$ has no outgoing edges. 

For every vertex $\eta_j$ with $j > d$, assign a function $f_j : \mathbb{R}^{d_j} \to \mathbb{R}$, where $d_j$ is the number of elements in the set
\begin{equation*}
S_j := \{\, i : \text{there is an edge from } \eta_i \text{ to } \eta_j \,\}.
\end{equation*}
Assume $1 \le d_j \le d_*$ for all $j > d$, and define
\begin{equation}\label{eq:Fj}
\begin{split}
F_j &:= x_j, \quad \text{for all } j \le d,\\
F_j &:= f_j\bigl( (F_i)_{i \in S_j} \bigr), \quad \text{for all } j > d. 
\end{split}
\end{equation}
Then $F_T(x_1,\dots,x_d)$ is a function from $\mathbb{R}^d$ to $\mathbb{R}$. Assuming
\begin{equation}\label{eq:comp}
\max_{{\boldsymbol\alpha},\abs{\boldsymbol\alpha}\leq \beta}{\esssup_{\mathbf{x}\in \mathbb{R}^{d_j}}} \abs{D^{\boldsymbol\alpha} f_j(\mathbf{x})}\le 1, \quad \text{for all } j = d+1,\dots,T, 
\end{equation}
we denote the set of all such functions $F_T$ by $\mathcal{W}^{\beta,\infty}_{d_*,T}([-1,1]^d)$. 
\begin{assumption}\label{a:composite}
Assume $f_*\in \mathcal{W}^{\beta,\infty}_{d_*,T}([-1,1]^d)$.
	\end{assumption}
This model of compositional functions can accommodate a broad class of compositional structures. To illustrate this idea, we present the interaction models from \citet{stone1994use}, noting that we allow for more general models than this simple summation case.

\begin{example}
    For some $d_* \in \{1,..,d\}$, denote $\mathbf{x}_I=(x_{i_1},...,x_{i_{d_*}})'$, where $I=\{i_1,...,i_{d_*}\}$ and $1\leq i_1\leq \cdots\leq i_{d_*}\leq d$. And assume the structure
    \begin{equation*}
        f_0(\mathbf{x})=\sum_{I\subseteq \{1,...,d\},\abs{I}=d_*} f_I(\mathbf{x}_I),
    \end{equation*}
   where each function $f_I(\cdot)$ satisfies equation~(\ref{eq:comp}). Denote the set of indices \( I \) as \(\mathcal{I}\), and let its cardinality be \( k \). Order the indices as \((I_1, \ldots, I_k)\). According to the definition in equation~(\ref{eq:Fj}), we have 
\begin{equation*}
F_{j+d} = f_{I_j} \quad \text{for } j = 1, \ldots, k.
\end{equation*}
Furthermore, define

\begin{equation*}
\begin{split}
    F_{d+k+1}& = f_{I_1} + f_{I_2},\\
    F_{d+k+j}& = F_{d+k+1} + f_{I_{j+1}}, \quad \text{for } j = 2, \ldots, k-1,\\
    F_{d+2k-1}& = f_0(\mathbf{x}).
\end{split}
\end{equation*}
Thus, we conclude that $f_0 \in \mathcal{W}^{\beta,\infty}_{d_*, d+2k-1}([-1,1]^d)$.
\end{example}

Applying the approximation result for functions with compositional structure (see Appendix \ref{appendix:curse}), we obtain the following theorem.
\begin{theorem}
		\label{th:composite}
		Suppose Assumptions \ref{a:1} and \ref{a:composite} hold. Let $\hat{f}_{\mathrm{MLP}}$ be the MLP ReLU network estimator defined by equation \eref{eq2}, restricted to $\mathcal{F}_{\mathrm{MLP}}$, for a loss function obeying Assumption \ref{a:loss}, with width $H\asymp n^{\frac{d_*}{2\beta+d_*}}\log^{2} n$ and depth $L\asymp \log n$. Then with probability at least $1-\exp(-n^{\frac{d_*}{2\beta+d_*}}\log^{4} n )$, for n large enough,
		\begin{enumerate}
			\item[(a)] $\norm{\hat{f}_{\mathrm{MLP}}-f_*}^2_{L_2(\textbf{X})}\leq C\cdot \{n^{-\frac{2\beta}{2\beta+d_*}}\log^{4} n +\frac{\log\log n}{n}\}$ and
			\item[(b)] $\expectation_n[(\hat{f}_{\mathrm{MLP}}-f_*)^2]\leq C\cdot \{n^{-\frac{2\beta}{2\beta+d_*}}\log^{4} n +\frac{\log\log n}{n}\},$
		\end{enumerate}
		for a constant $C>0$ independent of n, which may depend on $d_*,M,\beta,T$, and other fixed constants, but it doesn't depend on $d$.
	\end{theorem}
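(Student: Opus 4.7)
The plan is to combine Theorem \ref{thm:general_architecture} with a compositional approximation result, constructed by stitching together the narrow fully connected subnetworks produced by Theorem \ref{thm:ful} along the DAG $\mathcal{A}$ encoding Assumption \ref{a:composite}. The target is an MLP whose architecture depends on the effective dimension $d_*$ rather than the ambient $d$, which is what ultimately breaks the curse of dimensionality in the bias--variance tradeoff of Theorem \ref{thm:general_architecture}.

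First, for every non-input vertex $\eta_j$ ($j>d$), Assumption \ref{a:composite} places $f_j$ in $\mathcal{W}^{\beta,\infty}([-1,1]^{d_j})$ with $d_j\leq d_*$. Applying Theorem \ref{thm:ful} to each $f_j$ at accuracy $\epsilon$ yields a fully connected ReLU MLP $\widehat{f}_j$ with $\|\widehat{f}_j-f_j\|_\infty\leq\epsilon$, depth $L_j\lesssim \log(1/\epsilon)$, width $H_j\lesssim\epsilon^{-d_*/(2\beta)}$, and size $W_j\lesssim\epsilon^{-d_*/\beta}\log(1/\epsilon)$, with all constants depending on $\beta,d_*$ but not on $d$.

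Second, I would traverse $\mathcal{A}$ in topological order and assemble the $\widehat{f}_j$ into a single fully connected network $\widehat{F}_T$. The identity $x=\sigma_R(x)-\sigma_R(-x)$, emphasized by \citet{schmidt2020nonparametric} and already invoked inside Theorem \ref{thm:ful}, lets one forward previously computed intermediate values $\widehat{F}_i$ through additional layers without distortion; after padding each subnetwork to a common depth one may serially concatenate them while remaining fully connected. Since $T$ is constant in $n$, the aggregate architecture retains depth $L\lesssim\log(1/\epsilon)$, width $H\lesssim\epsilon^{-d_*/(2\beta)}$, and size $W\lesssim\epsilon^{-d_*/\beta}\log(1/\epsilon)$. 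Error propagation is then handled by induction along the DAG using the decomposition
\begin{equation*}
\widehat{F}_j-F_j=\bigl[f_j(\widehat{F}_{S_j})-f_j(F_{S_j})\bigr]+\bigl[\widehat{f}_j(\widehat{F}_{S_j})-f_j(\widehat{F}_{S_j})\bigr],
\end{equation*}
combined with the Lipschitz property of each $f_j$ (a consequence of $\beta\geq 1$ and the uniform bound on weak derivatives in equation (\ref{eq:comp})); this gives $\|\widehat{F}_T-F_T\|_\infty\leq C(T,d_*,\beta)\,\epsilon$, so that $\epsilon_{\mathrm{DNN}}\lesssim\epsilon$ after absorbing the constant.

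Finally, choosing $\epsilon\asymp n^{-\beta/(2\beta+d_*)}$ and $\gamma\asymp n^{d_*/(2\beta+d_*)}\log^{4}n$ balances the bias term $\epsilon_{\mathrm{DNN}}^2\asymp n^{-2\beta/(2\beta+d_*)}$ against the variance term $WL\log W\log n/n$, which is of the same order up to the advertised logarithmic factor; Theorem \ref{thm:general_architecture} then delivers statements (a) and (b) with the stated probability. The main obstacle is the composition step: one must exhibit a single fully connected network that realizes the DAG without either inflating the width beyond the $d_*$-driven rate or losing the fully connected structure when subnetworks of differing depths are concatenated. The information-preserving property of ReLU together with a careful layer-alignment/padding argument makes this possible, and the fact that $T$ does not grow with $n$ is what preserves all the orders above and rules out the ambient dimension $d$ from the final rate.
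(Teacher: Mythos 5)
Your proposal follows essentially the same route as the paper: the paper proves this theorem by invoking Lemma \ref{lemma:composite} (stated without proof, obtained by substituting Theorem \ref{thm:ful} into the compositional argument of \citet[Chapter 8]{petersen2024mathematical}) and then plugging the resulting $\epsilon_{\mathrm{DNN}}$ into Theorem \ref{thm:general_architecture} with the same bias--variance balancing. Your vertex-by-vertex construction along the DAG, identity-forwarding via $x=\sigma_R(x)-\sigma_R(-x)$, and Lipschitz error propagation are precisely the omitted details of that lemma, so the argument is correct and matches the paper's approach.
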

Here, the convergence rate depends on the largest dimension $d_*$ of the inputs within the compositional structure, rather than the original input dimension $d$. 

It is worth noting that the constant $C$ does not depend on $d$ (although $T$ may potentially depend on $d$), and this result may be particularly useful in high-dimensional settings. Under this setting, deep neural networks are superior to other machine learning methods in two respects. First, they can capture nonlinear compositional structures. Second, deep neural networks can automatically mitigate the curse of dimensionality without requiring variable selection or dimension reduction, thereby avoiding the bias that such procedures may introduce.

\subsection{Functions on manifolds}\label{sec:manifold}
The following definition of a compact smooth manifold is adapted from \citet[Chapter 1]{milnor1997topology}. For a comprehensive treatment of manifolds, we refer the reader to that work.
\begin{definition}[Compact Smooth Manifold]
        For $k,l\in \mathbb{N}_+$, let $U\subset \mathbb{R}^k$ and $V\subset \mathbb{R}^k$. A mapping $f:U\rightarrow V$ is called \textbf{smooth} if for each $\mathbf{x}\in U$ there exists an open set $U'\subset \mathbb{R}^k$ such that $\mathbf{x}\in U'$. And there exists a smooth mapping $F: U'\rightarrow \mathbb{R}^l$ such that $\forall \mathbf{x}^1\in U\cap U'$, $F(\mathbf{x}^1)=f(\mathbf{x}^1)$.
        
      A map $f:U\rightarrow V$ is called a \textbf{diffeomorphism} if $f$ maps $U$ homeomorphically onto $V$ and if both $f$ and its inverse $f^{-1}$ are smooth.
      
For $d,m\in \mathbb{N}_+$, a compact subset $\mathcal{M} \subset \mathbb{R}^d$ is called a \textbf{compact smooth manifold} of dimension $m$ if for every $\mathbf{x} \in \mathcal{M}$, there exists an open set $W\subset \mathbb{R}^k$ such that $\mathbf{x}\in W$ and $W\cap M$ is diffeomorphic to an open subset $U\subset\mathbb{R}^m$.
\end{definition}
We adjust Assumption \ref{a:2} to the following assumption.
\begin{assumption}\label{a:manifold}
 We further assume that there exists a compact, smooth $ m$-dimensional manifold $\mathcal{M}\subset [-1,1]^d$ such that $\mathbf{X}\in \mathcal{M}$. And
		we assume $f_*$ that lies in the Sobolev ball $\mathcal{W}^{\beta,\infty}(\mathcal{M})$, with smoothness $\beta \in \mathbb{N}_+$,
		\begin{equation*}
			\mathcal{W}^{\beta,\infty}(\mathcal{M})\coloneqq \{f: \max_{{\boldsymbol\alpha},\abs{\boldsymbol\alpha}\leq \beta}{\esssup_{\mathbf{x}\in \mathcal{M}}} \abs{D^{\boldsymbol\alpha} f(\mathbf{x})}\leq 1\},
		\end{equation*}
		where $\boldsymbol{\alpha} = (\alpha_1,...,\alpha_d)$, $\abs{\boldsymbol\alpha}=\alpha_1+...+\alpha_d$ and $D^{\boldsymbol\alpha}f$ is the weak derivative.
	\end{assumption}
Applying the approximation result for functions on manifolds (see Appendix \ref{appendix:curse}), we obtain the following theorem.
\begin{theorem}
		\label{th:manifold}
		Suppose Assumptions \ref{a:1} and \ref{a:manifold} hold. Let $\hat{f}_{\mathrm{MLP}}$ be the MLP ReLU network estimator defined by equation \eref{eq2}, restricted to $\mathcal{F}_{\mathrm{MLP}}$, for a loss function obeying Assumption \ref{a:loss}, with width $H\asymp n^{\frac{m}{2\beta+m}}\log^{2} n$ and depth $L\asymp \log n$. Then with probability at least $1-\exp(-n^{\frac{m}{2\beta+m}}\log^{4} n )$, for n large enough,
		\begin{enumerate}
			\item[(a)] $\norm{\hat{f}_{\mathrm{MLP}}-f_*}^2_{L_2(\textbf{X})}\leq C\cdot \{n^{-\frac{2\beta}{2\beta+m}}\log^{4} n +\frac{\log\log n}{n}\}$ and
			\item[(b)] $\expectation_n[(\hat{f}_{\mathrm{MLP}}-f_*)^2]\leq C\cdot \{n^{-\frac{2\beta}{2\beta+m}}\log^{4} n +\frac{\log\log n}{n}\},$
		\end{enumerate}
		for a constant $C>0$ independent of n, which may depend on $\mathcal{M}, \beta,m$, and other fixed constants.
\end{theorem}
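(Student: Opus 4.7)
The plan is to reduce to Theorem \ref{thm:general_architecture} by supplying a tailored approximation bound $\epsilon_{\mathrm{DNN}}$ that scales with the intrinsic manifold dimension $m$ rather than the ambient dimension $d$, and then to calibrate $\epsilon_{\mathrm{MLP}}$ and $\gamma$ exactly as in the proof of Theorem \ref{th:1}. The argument is therefore structurally identical to that of Theorem \ref{th:1}; the only new ingredient is a manifold-adapted approximation lemma that replaces the Sobolev approximation on $[-1,1]^d$.

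First I would establish the manifold approximation result referenced in Appendix \ref{appendix:curse}: for any $\epsilon_{\mathrm{MLP}}\in(0,1)$ and any $f_*\in \mathcal{W}^{\beta,\infty}(\mathcal{M})$, there exists a fully connected ReLU network $\hat f$ satisfying $\sup_{\mathbf{x}\in\mathcal{M}}\abs{\hat f(\mathbf{x})-f_*(\mathbf{x})}\leq \epsilon_{\mathrm{MLP}}$ with depth $L\lesssim \log(1/\epsilon_{\mathrm{MLP}})$ and weight count $W\lesssim \epsilon_{\mathrm{MLP}}^{-m/\beta}\log(1/\epsilon_{\mathrm{MLP}})$. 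Following \citet[Chapter 8]{petersen2024mathematical}, I would cover $\mathcal{M}$ by a finite atlas $\{(U_k,\phi_k)\}_{k=1}^{K}$ of smooth charts $\phi_k:U_k\cap\mathcal{M}\to V_k\subset \mathbb{R}^m$, fix a smooth partition of unity $\{\rho_k\}_{k=1}^K$ subordinate to $\{U_k\}$, and decompose $f_* = \sum_{k=1}^K \rho_k f_*$ on $\mathcal{M}$. Each local piece $(\rho_k f_*)\circ\phi_k^{-1}$ lies in a Sobolev ball on $V_k\subset\mathbb{R}^m$, so Theorem \ref{thm:ful} applied \emph{in dimension $m$} supplies a narrow ReLU subnetwork of the required size. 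The chart inverses $\phi_k^{-1}$ and the bump factors $\rho_k$ are smooth on fixed compact domains and can be realized as ReLU subnetworks of size $\mathcal{O}(1)$ in $\epsilon_{\mathrm{MLP}}$; their cost is absorbed into a constant depending only on $\mathcal{M}$. Assembling the $K$ local pieces in parallel and converting the result to a single MLP via Lemma \ref{cor:full} preserves the depth, width and weight bounds up to the constant $C$.

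Second, I would plug this approximation into Theorem \ref{thm:general_architecture} with $\epsilon_{\mathrm{MLP}} = n^{-\beta/(2\beta+m)}$ and $\gamma = n^{m/(2\beta+m)}\log^{4}n$. The squared bias is then $\epsilon_{\mathrm{MLP}}^{2} = n^{-2\beta/(2\beta+m)}$; the induced size $W\asymp n^{m/(2\beta+m)}\log n$ together with depth $L\asymp \log n$ yields a variance contribution $WL\log(W)\log(n)/n\asymp n^{-2\beta/(2\beta+m)}\log^{4}n$ that matches the bias up to the stated logarithmic factor; and $\gamma/n = n^{-2\beta/(2\beta+m)}\log^{4}n$ controls the confidence term at the claimed probability $1-\exp(-\gamma)$. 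The stated width $H\asymp n^{m/(2\beta+m)}\log^{2} n$ is chosen generously enough to house the approximation network after the MLP conversion. Parts (a) and (b) of Theorem \ref{thm:general_architecture} then deliver both conclusions, with a constant depending on $\mathcal{M}$, $\beta$, $m$ and $M$ but not on $d$.

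The principal obstacle is the manifold approximation lemma itself: one must realize the chart inverses $\phi_k^{-1}$ and the partition-of-unity factors $\rho_k$ as ReLU subnetworks without inflating the weight count beyond the intrinsic-dimension budget $\epsilon_{\mathrm{MLP}}^{-m/\beta}$. This is precisely where the low-dimensional structure of $\mathcal{M}$ pays off: the $\epsilon_{\mathrm{MLP}}$-dependence is confined to the $m$-dimensional Sobolev subpiece, where Theorem \ref{thm:ful} applies, while the ambient-dimension-dependent cost of the chart and bump-function subnetworks remains $\mathcal{O}(1)$ in $\epsilon_{\mathrm{MLP}}$. Once this lemma is in hand, the remainder is a mechanical specialization of Theorem \ref{thm:general_architecture}, exactly parallel to the proof of Theorem \ref{th:1}.
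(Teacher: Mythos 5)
Your proposal follows essentially the same route as the paper: the paper obtains Theorem \ref{th:manifold} by combining Theorem \ref{thm:general_architecture} with the manifold approximation result (Lemma \ref{lemma:manifold}, stated without proof and deferred to \citet[Proposition 8.7]{petersen2024mathematical} with Theorem \ref{thm:ful} substituted for Yarotsky's result), and then choosing $\epsilon_{\mathrm{MLP}}=n^{-\beta/(2\beta+m)}$ and $\gamma=n^{m/(2\beta+m)}\log^4 n$ exactly as you do. Your chart/partition-of-unity sketch of the approximation lemma is the intended argument (one minor caveat: in the standard construction one composes the local approximant with the \emph{forward} chart map, typically a linear projection, rather than realizing the smooth inverse $\phi_k^{-1}$ by a ReLU network, since smooth nonlinear maps cannot be represented exactly by piecewise-linear networks), so the proposal is correct and matches the paper.
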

In this case, the convergence rate depends on the intrinsic dimension $m$ rather than the ambient dimension $d$. Notice that the constant term $C$ implicitly depends on $d$ since $\mathcal{M}$ may implicitly depend on $d$.

In empirical economics applications, it is often difficult to justify that the function of interest possesses a compositional structure. In contrast, it is usually more feasible to assess whether the data lie on a low-dimensional manifold. Consequently, this result may be more useful in practice.

\section{Conclusion}\label{sec:conclusion}
By modifying only the approximation component of the proof in \citet[Theorem 1]{farrell2021deep}, this note establishes that the convergence rate of MLP estimators is indeed optimal (up to a logarithmic factor) in the sense of \citet{stone1982optimal}. To this end, this note extends the results of \citet{yarotsky2017error}. Instead of focusing on approximation results tied to network size, we demonstrate that a narrower network of comparable size can achieve the same approximation rate. This result can be of independent interest. One can apply our narrower network construction to convert other neural architectures into fully connected networks without substantially increasing the total number of parameters.

In addition, we establish convergence rates for functions with compositional structures and for functions defined on low-dimensional manifolds. In both cases, consistent with previous literature, we show that deep neural networks can effectively circumvent the curse of dimensionality. We encourage future empirical work to justify the use of deep learning based on these two reasons.

\begin{appendix}

\section{Lemma}

\subsection{Lemmas used in Appendix \ref{proof:modified}}
We will first introduce the lemma from \citet{yarotsky2017error}, which combines results from \cite{liu2022optimal}. We then introduce a lemma used to convert approximation neural networks into fully connected networks. These results are used to show the main results in Appendix \ref{proof:modified}.

The following lemma is based on \citet[Proposition 3]{yarotsky2017error}, while its final part follows from \citet[Proposition 2]{liu2022optimal}.
\begin{lemma}\label{le:neural}
    The function $f(x)=x^2$ on the segment $[0,1]$ can be approximated with any error $\delta>0$ by a ReLU network $\Tilde{f}_{sq,\delta}$ such that $\Tilde{f}_{\mathrm{sq},\delta}(0)=0$ and $\abs{\Tilde{f}_{\mathrm{sq},\delta}-x^2}< \delta$ for $x\in [-1,1]$. Given $Q>0$, and $\epsilon \in (0,1)$, there is a ReLU network $\eta$ with two input units that implements a function $\Tilde{\times} : \mathbb{R}^2 \rightarrow \mathbb{R}$.
    
Assume without loss of generality that $Q\geq 1$ and set
    \begin{equation*}\label{eq:times}
        \Tilde{\times}(x,y)=\frac{Q^2}{8} (\Tilde{f}_{\mathrm{sq},\delta}(\frac{\abs{x+y}}{2Q})-\Tilde{f}_{\mathrm{sq},\delta}(\frac{\abs{x}}{2Q})-\Tilde{f}_{\mathrm{sq},\delta}(\frac{\abs{y}}{2Q})),
    \end{equation*}
    where $\delta=\frac{8\epsilon}{3Q^2}$. Then, the following hold:

    \begin{enumerate}
        \item for any inputs $x,y$, if $\abs{x}\leq  Q$ and $\abs{y}\leq Q$, then $\abs{\Tilde{\times}(x,y)-xy}\leq \epsilon$;
        \item $\Tilde{\times}(x,0)=\Tilde{\times}(0,y)=0$;
        \item the depth and the number of weights and hidden units in $\eta$ are not greater than $c_1 \ln(1/\epsilon)+c_2$ with an universal constant $c_1$ and a constant $c_2 = c_2(Q)$,
        \item the neural network can be made fully connected with a width of at most 12.
    \end{enumerate}
    
\end{lemma}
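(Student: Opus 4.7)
The plan is to proceed in three phases that mirror the lemma's four conclusions: first construct the scalar approximant $\tilde{f}_{\mathrm{sq},\delta}$ via \citet{yarotsky2017error}'s sawtooth identity; second, lift it to a bivariate multiplier $\tilde{\times}$ through the polarization identity $(x+y)^2-x^2-y^2 = 2xy$ and control the error; and third, rearrange the resulting feedforward network into a fully connected layout of width $12$ by following the routing strategy of \citet{liu2022optimal}. For the first step, I would take the tent map $g(x) = 2\sigma_R(x) - 4\sigma_R(x-1/2)$ on $[0,1]$, let $g_s$ denote its $s$-fold composition, and truncate Yarotsky's identity $x^2 = x - \sum_{s\geq 1} g_s(x)/4^s$ at $s=m$ with $m \asymp \log(1/\delta)$, which yields approximation error $\leq 4^{-m-1} \leq \delta$. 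Since each $g_s(0)=0$, the normalization $\tilde{f}_{\mathrm{sq},\delta}(0)=0$ is automatic; extension to $[-1,1]$ uses $|x| = \sigma_R(x)+\sigma_R(-x)$, and a routine count of ReLU units across the tent tower gives depth and weight count $\mathcal{O}(m) = \mathcal{O}(\log(1/\delta))$.

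For the second step, I would substitute $u = |x+y|/(2Q)$, $v=|x|/(2Q)$, $w=|y|/(2Q)$, which all lie in $[0,1]$ whenever $|x|,|y|\leq Q$. Using the polarization identity together with the stated $Q^2/8$ scaling expresses the ``ideal'' value (obtained by replacing $\tilde{f}_{\mathrm{sq},\delta}$ with $t\mapsto t^2$) as a constant multiple of $xy$; each of the three calls to $\tilde{f}_{\mathrm{sq},\delta}$ then introduces an additive error bounded by $\delta$, so the triangle inequality gives a total error of at most $\tfrac{3Q^2\delta}{8}$. Setting $\delta = 8\epsilon/(3Q^2)$ yields the claimed $\epsilon$-bound, which justifies the calibration in the statement. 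The vanishing property $\tilde{\times}(x,0)=0$ then follows immediately: under $y=0$ one has $|x+y|=|x|$, so the first two calls cancel and the third equals $\tilde{f}_{\mathrm{sq},\delta}(0)=0$; symmetry handles $x=0$.

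For the fourth claim I would invoke the explicit width-reduction construction of \citet{liu2022optimal}. The idea is to maintain a constant-width ``spine'' that (i) carries the original inputs $x,y$ unchanged through every layer via $x = \sigma_R(x)-\sigma_R(-x)$, (ii) iteratively applies the tent map $g$ in parallel on the three scalar arguments $u,v,w$, and (iii) accumulates the partial sums $\sum_{s=1}^{k} g_s(\cdot)/4^s$ along the depth direction, so no layer ever needs more than a bounded number of active channels. The main obstacle I anticipate is exactly this last step: verifying that twelve channels suffice simultaneously for input propagation, three parallel tent iterations, three running accumulators, and the final linear combination requires an explicit layerwise layout rather than an abstract bound. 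The approximation and error-control arguments in the first two steps are routine once Yarotsky's sawtooth identity is in hand.
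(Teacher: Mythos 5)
Your overall route is exactly the one the paper itself relies on: the lemma is not proved in the paper but imported from \citet[Proposition 3]{yarotsky2017error} (sawtooth/tent-map squarer plus polarization) and \citet[Proposition 2]{liu2022optimal} (the width-$12$ fully connected rearrangement), and your three phases reproduce that argument faithfully. Two points deserve attention. First, the one step in conclusion (1) that actually has to be checked is the normalization, and you wave it through with the phrase ``a constant multiple of $xy$.'' With the prefactor $Q^2/8$ as stated, the ideal value obtained by replacing $\Tilde{f}_{\mathrm{sq},\delta}$ with $t\mapsto t^2$ is
\begin{equation*}
\frac{Q^2}{8}\cdot\frac{(x+y)^2-x^2-y^2}{4Q^2}=\frac{xy}{16},
\end{equation*}
so your triangle-inequality argument establishes $\abs{\Tilde{\times}(x,y)-xy/16}\le\epsilon$, not the claimed $\abs{\Tilde{\times}(x,y)-xy}\le\epsilon$. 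For the lemma to hold the prefactor must be $2Q^2$ (equivalently, the constant multiple must be pinned to $1$), with the corresponding calibration $\delta=\epsilon/(6Q^2)$; the $\delta=8\epsilon/(3Q^2)$ in the statement is consistent only with the $Q^2/8$ prefactor. You should either correct the constant or flag the discrepancy explicitly rather than leave the constant unidentified, since as written your proof of conclusion (1) does not deliver the stated bound.

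Second, for conclusion (4) you correctly identify the routing strategy (identity pass-through via $x=\sigma_R(x)-\sigma_R(-x)$, three parallel tent towers, running accumulators along the depth direction) but acknowledge that you do not exhibit the layerwise layout. This is acceptable at the level of the paper, which also defers entirely to \citet{liu2022optimal} here; a quick channel count (two units for each tent map $2\sigma_R(u)-4\sigma_R(u-1/2)$, one for each nonnegative accumulator, times three squarers, plus the first-layer units forming $\abs{x+y}$, $\abs{x}$, $\abs{y}$) confirms that a constant width on the order of $12$ suffices, so the gap is one of bookkeeping rather than substance. The remaining steps (the $4^{-m-1}$ truncation error, the domain check $\abs{x+y}/(2Q)\in[0,1]$, the vanishing property via $\Tilde{f}_{\mathrm{sq},\delta}(0)=0$, and the $\mathcal{O}(\ln(1/\epsilon))$ depth and weight count) are all correct.
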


The next lemma enables us to convert the approximating function used in Proof \ref{proof:modified} into fully connected architectures.
\begin{lemma}\label{cor:full}
    With $\Tilde{\times}(\cdot)$ defined in Lemma \ref{le:neural}, and a fully connected feedforward ReLU neural network is denoted by $f_\mathit{MLP}^m$ or $f_\mathit{MLP}^\alpha$, indicating dependence on the parameter $m$ or $\alpha$, respectively. We assume that, for parameters $m,\alpha\in \mathbb{R}$, $m$ takes values in a finite set $S_m$ and $\alpha$ takes values in a finite set $S_\alpha$. Denote $N_{\mathrm{max}}\coloneqq \max(\abs{S_m},\abs{S_\alpha})$. Suppose each of the neural networks has a depth at most $L$, a width at most $H$, and $a_{m,\alpha} \in \mathbb{R}$. Then, we can convert the following neural networks $\sum_{\alpha\in S_\alpha}\Tilde{\times}(\sum_{m\in S_m} a_{m,\alpha}f_\mathit{MLP}^m,f_\mathit{MLP}^\alpha)$ into fully connected architectures with depth at most $L+1+c_1 \ln(1/\epsilon)+c_2$ and width at most $\max(12,2H) N_{\mathrm{max}}$.\footnote{For simplicity, we assume the functions satisfies the definition of $\Tilde{\times}(\cdot)$.}
    \begin{remark}
    A key observation for our proof is that the input can always be passed through a ReLU activation function without altering its value, which is not possible with other activation functions (e.g., the sigmoid function).
\end{remark} 
    \begin{proof}
        For each $f_\mathrm{MLP}^m, m\in S_m$, if the depth is less than $L$, we can use the fact that
        \begin{equation*}
            x = \sigma_R(x)-\sigma_R(-x)
        \end{equation*}
       to extend it so that every $f_\mathrm{MLP}^m$ has the same depth $L$. Since each $f_\mathrm{MLP}^m$ is fully connected and has the same depth, combining them into a single fully connected network is achieved by simply connecting each node to the nodes in the adjacent layers. Thus, each value of $f_\mathrm{MLP}^m$ for $m\in S_m$ can be represented by the fully connected neural network with depth at most $L$ and width at most $2HN_\mathrm{max}$ with $\abs{S_m}$ output units. The same arguments apply to $f_\mathrm{MLP}^\alpha$.

        For each $\alpha$, the only difference of $\sum_{m\in S_m} a_{m,\alpha}f_\mathrm{MLP}^m$ is $a_{m,\alpha}$. They are merely linear combinations of $f_\mathrm{MLP}^m$ for $m\in S_m$. Therefore, with one more layer, we can use the same fully connected neural network but with at most $N_\mathrm{max}$ output units to represent each value of $\sum_{m\in S_m} a_{m,\alpha}f_\mathrm{MLP}^m$. Each output unit corresponds to a different value of $\alpha$. That is,

        \begin{equation*}
            \sum_{m\in S_m} a_{m,\alpha}f_\mathrm{MLP}^m = \sigma_R(\sum_{m\in S_m} a_{m,\alpha}f_\mathrm{MLP}^m)-\sigma_R(-\sum_{m\in S_m} a_{m,\alpha}f_\mathrm{MLP}^m)
        \end{equation*}
        and the width needed for this operation is at most $2N_\mathrm{max}$.
     
        Thus, with a single fully connected neural network, we can represent each value of $\sum_{m\in S_m} a_{m,\alpha}f_\mathrm{MLP}^m$ and $f_\mathrm{MLP}^\alpha$ for each $\alpha\in S_\alpha$. The neural network has at most $2N_\mathrm{max}$ output units, depth at most $L+1$, and width at most $2HN_\mathrm{max}$. Each output unit corresponds either $\sum_{m\in S_m} a_{m,\alpha}f_\mathrm{MLP}^m$ or $f_\mathrm{MLP}^\alpha$ for each $\alpha\in S_\alpha$.

        For each $\alpha\in S_\alpha$, each $\sum_{m\in S_m} a_{m,\alpha}f_\mathrm{MLP}^m$ corresponds to a single $f_\mathrm{MLP}^\alpha$. Thus, by Lemma \ref{le:neural}, we need at most $L+1+c_1 \ln(1/\epsilon)+c_2$ more layers, whose width are at most $\max(12,2H)N_\mathrm{max}$, to represent $\sum_{\alpha\in S_\alpha}\Tilde{\times}(\sum_{m\in S_ m} a_{m,
        \alpha}f_\mathrm{MLP}^m,f_\mathrm{MLP}^\alpha)$.
        \end{proof}
        
\end{lemma}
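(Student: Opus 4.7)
The plan is to build the desired MLP in three stages, each exploiting the ReLU identity $x = \sigma_R(x) - \sigma_R(-x)$, which transports any signed scalar through a ReLU layer at a cost of only two units of width. This identity is the key tool because it lets information flow unchanged across layers of a fully connected ReLU architecture, making it possible to splice subnetworks of different depths together without losing signal.

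In the first stage I would align depths. If some $f_\mathrm{MLP}^m$ or $f_\mathrm{MLP}^\alpha$ has depth strictly less than $L$, I pad it on the output side by repeatedly applying the ReLU identity above, so that all $|S_m| + |S_\alpha| \le 2 N_\mathrm{max}$ subnetworks terminate at exactly depth $L$. I then place them side by side as a single fully connected network sharing the common input $\mathbf{x}$, filling in any missing cross-branch wires with zero weights. This gives a fully connected architecture of depth $L$ and width at most $2H \cdot N_\mathrm{max}$, whose output layer exposes all the relevant scalars in parallel.

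In the second stage, I would use one additional layer to form the $|S_\alpha|$ linear combinations $\sum_{m \in S_m} a_{m,\alpha} f_\mathrm{MLP}^m$, while simultaneously forwarding the values $f_\mathrm{MLP}^\alpha$ unchanged via the same ReLU identity. This costs depth $1$ and width at most $2 \cdot 2 N_\mathrm{max}$. After this step the network has depth $L+1$, and its final layer contains, for each $\alpha \in S_\alpha$, the two scalar inputs required by the multiplier $\Tilde{\times}$ from Lemma \ref{le:neural}. Finally, I would append $|S_\alpha|$ parallel copies of the fully connected multiplier subnetwork from Lemma \ref{le:neural}(4), each of depth at most $c_1 \ln(1/\epsilon) + c_2$ and width at most $12$, feed each with its appropriate pair, and collect the outputs with a single linear readout to obtain $\sum_{\alpha \in S_\alpha} \Tilde{\times}\bigl(\sum_{m \in S_m} a_{m,\alpha} f_\mathrm{MLP}^m,\; f_\mathrm{MLP}^\alpha\bigr)$. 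The total depth is $L + 1 + c_1 \ln(1/\epsilon) + c_2$ and the multiplier block contributes width at most $12 N_\mathrm{max}$, so overall the width is bounded by $\max(12, 2H) N_\mathrm{max}$.

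The main obstacle is the width bookkeeping rather than any deep idea: one must verify that in every layer the three sources of width expansion, namely the factor $2H$ from the original subnetworks after identity-padding, the factor $N_\mathrm{max}$ from parallelization across $S_m \cup S_\alpha$, and the constant $12$ from the multiplier block, never combine multiplicatively within a single layer. Once this layer-by-layer accounting is done, the stated depth and width bounds follow and we have produced a single fully connected ReLU network realizing the claimed function.
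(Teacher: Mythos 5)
Your proposal is correct and follows essentially the same route as the paper: depth-align the subnetworks via the identity $x=\sigma_R(x)-\sigma_R(-x)$, stack them in parallel to width $2HN_{\mathrm{max}}$, spend one layer on the linear combinations $\sum_m a_{m,\alpha}f^m_{\mathrm{MLP}}$ while passing the $f^\alpha_{\mathrm{MLP}}$ through, and append $|S_\alpha|$ parallel width-$12$ multiplier blocks from Lemma \ref{le:neural}. The layer-by-layer width accounting you flag as the main obstacle works out exactly as in the paper's argument, yielding depth $L+1+c_1\ln(1/\epsilon)+c_2$ and width $\max(12,2H)N_{\mathrm{max}}$.
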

\subsection{Lemmas used in Section \ref{sec:curse}}\label{appendix:curse}
The lemmas in this subsection are adapted from \citet[Chapter 8]{petersen2024mathematical}, with the condition on the number of weights replaced by a condition on width. Based on the proofs in \citet[Chapter 8]{petersen2024mathematical}, one can directly establish the following two lemmas. The only change is that, in determining the width, we apply our Theorem~\ref{thm:ful} rather than \citet[Theorem 1]{yarotsky2017error}, which is used there to determine the network size. For brevity, we state the two lemmas without proof.

The following lemma is modified from \citet[Proposition 8.5]{petersen2024mathematical} and used to show Theorem \ref{th:composite}.
\begin{lemma}\label{lemma:composite}
    For any $\beta, d,d_*,T \in \mathbb{N_+}$ and $\epsilon\in (0,1)$, there is a fully connected ReLU network architecture that can represent any function in $\mathcal{W}^{\beta,\infty}_{d_*,T}([-1,1]^d)$, with the depth $L \leq c(ln(1/\epsilon)+1)$, the width $H\leq c\epsilon^{-\frac{d_*}{2\beta}}$. Moreover, using the fact that $W\leq C\cdot H^2L$, the weights $W\leq c\epsilon^{-\frac{d_*}{\beta}}(\ln(1/\epsilon)+1)$, with some constant $c=c(d_*,T,\beta)$.
\end{lemma}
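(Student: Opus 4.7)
The plan is to exploit the DAG structure defining $\mathcal{W}^{\beta,\infty}_{d_*,T}([-1,1]^d)$ and to approximate each constituent function $f_j$ individually by the fully connected architecture guaranteed by Theorem \ref{thm:ful}, then chain the approximations along the graph. Concretely, I would first enumerate the non-input vertices $\eta_{d+1},\dots,\eta_T$ in a topological order consistent with $\mathcal{A}$, and for each $j > d$ apply Theorem \ref{thm:ful} to the function $f_j : \mathbb{R}^{d_j} \to \mathbb{R}$ (with $d_j \le d_*$) to obtain a fully connected ReLU sub-network $\hat{f}_j$ of depth $\mathcal{O}(\ln(1/\epsilon'))$ and width $\mathcal{O}((\epsilon')^{-d_*/(2\beta)})$ approximating $f_j$ in sup norm with error at most $\epsilon'$, where $\epsilon'$ will be specified below.

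The second step is error propagation. Because every $f_j$ lies in the Sobolev ball of radius one, it is Lipschitz in its inputs with a constant $L_0$ depending only on $d_*$ and $\beta$. Defining recursively $\hat{F}_j = x_j$ for $j \le d$ and $\hat{F}_j = \hat{f}_j\bigl((\hat{F}_i)_{i \in S_j}\bigr)$ for $j > d$, an induction on the topological order together with the triangle inequality yields $\sup_{\mathbf{x}\in [-1,1]^d}\lvert\hat{F}_T(\mathbf{x}) - F_T(\mathbf{x})\rvert \le C(T,d_*,\beta)\,\epsilon'$, so that choosing $\epsilon' \asymp \epsilon$ (with constant depending only on $T,d_*,\beta$) gives total approximation error at most $\epsilon$.

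For the architecture, I would stack the sub-networks $\hat{f}_j$ in topological order, using the ReLU identity trick $x = \sigma_R(x) - \sigma_R(-x)$ highlighted in Lemma \ref{cor:full} both to carry forward any intermediate $\hat{F}_i$ that is consumed by a later $\hat{f}_j$ and to pad depths so that all arguments of $\hat{f}_j$ are available at the same layer. The depth then sums to at most $T-d$ blocks of depth $\mathcal{O}(\ln(1/\epsilon'))$, which is $\mathcal{O}(\ln(1/\epsilon))$ because $T$ is absorbed into the constant $c(d_*,T,\beta)$. At any given depth, the width is dominated by the width of the currently active $\hat{f}_j$ plus at most $T$ identity channels that preserve previously computed $\hat{F}_i$, giving $H \le c\,\epsilon^{-d_*/(2\beta)}$. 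The weight bound $W \le c\,\epsilon^{-d_*/\beta}(\ln(1/\epsilon)+1)$ follows from $W \le C H^2 L$ for fully connected networks. Fully connected-ness of the composed network is ensured by invoking Lemma \ref{cor:full} (and the pass-through construction it relies on) block by block.

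The main obstacles will be (i) keeping the identity pass-through channels from inflating the width beyond $\mathcal{O}(\epsilon^{-d_*/(2\beta)})$ when the DAG contains many parallel branches, and (ii) verifying that the propagation constant is genuinely independent of the ambient dimension $d$. Obstacle (i) is resolved by the observation that at most $T$ intermediate values need to be preserved at any depth, which is a constant absorbed in $c(d_*,T,\beta)$, while the width of each $\hat{f}_j$ already dominates asymptotically. Obstacle (ii) is handled by noting that assumption (\ref{eq:comp}) controls each $f_j$ in a Sobolev norm on $\mathbb{R}^{d_j}$ with $d_j \le d_*$, so the Lipschitz constants and the approximation constants from Theorem \ref{thm:ful} all depend only on $d_*$ and $\beta$; the variable $d$ enters only through the trivial identity layers on the input coordinates.
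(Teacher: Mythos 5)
Your construction---approximating each $f_j$ via Theorem \ref{thm:ful}, chaining the sub-networks along a topological order of the DAG with Lipschitz error propagation, and carrying intermediate values forward through the ReLU identity $x=\sigma_R(x)-\sigma_R(-x)$---is exactly the argument the paper relies on: the lemma is stated without proof, deferring to the compositional construction of Petersen's Chapter 8 (Proposition 8.5) with Theorem \ref{thm:ful} substituted for Yarotsky's size-based bound, and your width accounting (each $\hat f_j$ dominates, while the at most $T$ pass-through channels, including the $d<T$ inputs, are absorbed into $c(d_*,T,\beta)$) is the right one. The only technicality left implicit is that the approximate intermediate outputs $\hat F_i$ may fall slightly outside $[-1,1]$, which is handled by the standard ReLU clipping $x\mapsto \sigma_R(x+1)-\sigma_R(x-1)-1$ and does not affect the stated bounds.
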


The next lemma is modified from \citet[Proposition 8.7]{petersen2024mathematical}, and used to show Theorem \ref{th:manifold}.
\begin{lemma}\label{lemma:manifold}
     For any $\beta, d,m \in \mathbb{N_+}$ and $\epsilon\in (0,1)$, suppose $\mathcal{M}$ is a smooth, compact $m$ dimensional manifold s.t. $\mathcal{M}\subseteq [-1,1]^d$. Then, there is a fully connected ReLU network architecture that can represent any function in $\mathcal{W}^{\beta,\infty}(\mathcal{M})$, with the depth $L \leq c(ln(1/\epsilon)+1)$, the width $H\leq c\epsilon^{-\frac{m}{2\beta}}$. Moreover, using the fact that $W\leq C\cdot H^2L$, the weights $W\leq c\epsilon^{-\frac{m}{\beta}}(\ln(1/\epsilon)+1)$, with some constant $c=c(d,m,\mathcal{M},\beta)$.
\end{lemma}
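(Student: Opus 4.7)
The plan is to recycle the manifold-approximation argument of \citet[Proposition 8.7]{petersen2024mathematical}, but to substitute our narrow-network Theorem \ref{thm:ful} for the wide Yarotsky construction used there. Because $\mathcal{M}$ is a compact smooth $m$-manifold sitting inside $[-1,1]^d$, I first extract a finite atlas $\{(W_k,\phi_k)\}_{k=1}^K$ where each $W_k\subset\mathbb{R}^d$ is open, $W_k\cap\mathcal{M}$ is diffeomorphic to an open $U_k\subset\mathbb{R}^m$ via $\phi_k$, and $K=K(\mathcal{M})$ is finite by compactness, hence independent of $\epsilon$. Alongside the atlas I fix a smooth partition of unity $\{\rho_k\}_{k=1}^K$ subordinate to $\{W_k\}$, so that any $f\in\mathcal{W}^{\beta,\infty}(\mathcal{M})$ decomposes as $f=\sum_{k=1}^K\rho_k f$ pointwise on $\mathcal{M}$.

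Next, for each chart I extend $\phi_k$ to a smooth map $\tilde\phi_k:\mathbb{R}^d\to\mathbb{R}^m$ defined on a tubular neighborhood of $\mathcal{M}$ (for instance, the composition of closest-point projection onto $\mathcal{M}$ with $\phi_k$, or, simpler, the linear projection onto the tangent plane at a reference point when $\mathcal{M}$ is viewed locally as a graph). The pullback $g_k:=(\rho_k f)\circ\tilde\phi_k^{-1}$ can, after translation/rescaling and application of a standard Sobolev extension operator, be treated as an element of $\mathcal{W}^{\beta,\infty}([-1,1]^m)$ with norm bounded by a constant $C=C(\mathcal{M},\beta)$. Applying Theorem \ref{thm:ful} in dimension $m$ with tolerance $\epsilon/(CK)$ produces a fully connected ReLU sub-network $\hat g_k$ of width $\mathcal{O}(\epsilon^{-m/(2\beta)})$ and depth $\mathcal{O}(\ln(1/\epsilon))$ approximating $g_k$ to within $\epsilon/(CK)$ on the unit cube.

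The candidate approximator is $\hat f(\mathbf{x})=\sum_{k=1}^K \hat g_k(\tilde\phi_k(\mathbf{x}))$, with the partition of unity already absorbed into each $g_k$. Because $\tilde\phi_k$ is either linear (realized exactly by ReLU using $x=\sigma_R(x)-\sigma_R(-x)$) or smooth and bounded (and in the latter case approximable at cost dominated by the main term), each map uses width and depth controlled by constants times the ambient dimension $d$, which folds into $c=c(d,m,\mathcal{M},\beta)$. I then invoke Lemma \ref{cor:full} with $N_{\max}=K$ and with each $\hat g_k\circ\tilde\phi_k$ playing the role of an $f_{\mathrm{MLP}}^m$, splicing the $K$ parallel branches into a single fully connected network of width $\max(12,2H)K=\mathcal{O}(\epsilon^{-m/(2\beta)})$ and depth $\mathcal{O}(\ln(1/\epsilon))$. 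Using $W\leq CH^2 L$ for fully connected networks yields the weight bound $W\leq c\epsilon^{-m/\beta}(\ln(1/\epsilon)+1)$.

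The main obstacle will be ensuring that the chart maps $\tilde\phi_k$ and the Sobolev extensions do not secretly contribute a $d$-dependent leading term to $H$: the tangent-plane trick keeps $\tilde\phi_k$ linear, so it is implemented exactly with $\mathcal{O}(d)$ width and depth $1$, and the extension operator acts before the network is built, so its only footprint is a multiplicative constant in the target Sobolev ball. Once these checks are in place, the proof is essentially the manifold argument of \citet[Chapter 8]{petersen2024mathematical} with Yarotsky's wide approximator swapped for our narrower one, delivering the stated width $\mathcal{O}(\epsilon^{-m/(2\beta)})$ rather than the wider $\mathcal{O}(\epsilon^{-m/\beta})$ that a direct application of the original construction would produce.
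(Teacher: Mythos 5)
The paper states this lemma without proof, saying only that one follows the manifold argument of \citet[Chapter 8]{petersen2024mathematical} and substitutes Theorem~\ref{thm:ful} for Yarotsky's size-based bound when determining the width; your overall route (finite atlas, partition of unity, pull back each piece to $[-1,1]^m$, apply the narrow construction in dimension $m$, splice with Lemma~\ref{cor:full}) is exactly that intended argument, and the bookkeeping of width $K\cdot\mathcal{O}(\epsilon^{-m/(2\beta)})$, depth $\mathcal{O}(\ln(1/\epsilon))$, and $W\leq CH^2L$ is right.

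There is, however, one genuine gap in the construction as you wrote it. You absorb the partition of unity into the chart coordinates, setting $g_k=(\rho_k f)\circ\tilde\phi_k^{-1}$ and summing $\hat g_k(\tilde\phi_k(\mathbf{x}))$. This localizes only in the chart domain $U_k\subset\mathbb{R}^m$, not in the ambient space. A linear projection $\tilde\phi_k$ onto a tangent plane is not injective on $\mathcal{M}$: a point $\mathbf{x}\in\mathcal{M}\setminus W_k$ can project to a point of $U_k$ lying in the support of $g_k$, so $\hat g_k(\tilde\phi_k(\mathbf{x}))$ contributes a spurious nonzero term and $\hat f$ no longer approximates $f$ at such $\mathbf{x}$. (The closest-point-projection variant does not escape this either, since for $\mathbf{x}\in\mathcal{M}$ it reduces to $\phi_k(\mathbf{x})$, which is undefined off $W_k$.) The standard repair, and the one used in \citet[Chapter 8]{petersen2024mathematical}, is to keep an \emph{ambient-space} localizer: write $\rho_k f$ as the product of a bump $\hat\rho_k(\mathbf{x})$ supported in $W_k\subset\mathbb{R}^d$ (realizable as a product of $d$ piecewise-linear univariate factors, hence costing only depth and width depending on $d$ through the constant) with $(f\circ\phi_k^{-1})(\tilde\phi_k(\mathbf{x}))$, and implement the product with $\Tilde{\times}$ from Lemma~\ref{le:neural}. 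With that multiplication in place the invocation of Lemma~\ref{cor:full} is in fact the natural one (each summand has the form $\Tilde{\times}(\cdot,\cdot)$), and none of your asserted width, depth, or weight bounds change.
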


\section{Proof}\label{proof:modified}

For completeness, we provide a self-contained presentation below, with our modifications clearly highlighted. Appendix \ref{proof:error} reproduces \citet[Theorem 1]{yarotsky2017error}, while Appendix \ref{proof:modification} presents our modified proof based on Appendix \ref{proof:error}. We also include the proof of \citet[Theorem 1]{yarotsky2017error} so that readers can clearly see the connections and differences with our results.

\subsection{\citet[Theorem 1]{yarotsky2017error}}\label{proof:error}
The following lemma is \citet[Theorem 1]{yarotsky2017error}, while Theorem \ref{thm:ful} is a modified version of it.
\begin{lemma}
    For any $\beta,d\in \mathbb{N_+}$ and $\epsilon\in(0,1)$, there is a ReLU network architecture that
    it can approximate any function in $\mathcal{W}^{\beta,\infty}([-1,1]^d)$ with error $\epsilon$, with the depth $L \leq c(ln(1/\epsilon)+1)$, the number of weights $W\leq c\epsilon^{-\frac{d}{\beta}}ln(1/\epsilon+1)$, with some constant $c=c(d,\beta)$.
\end{lemma}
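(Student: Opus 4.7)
The proof naturally splits into a mathematical approximation argument and a ReLU implementation step. For the approximation, I would place a regular grid of centers $\mathbf m/N$ in $[-1,1]^d$ with $\mathbf m \in \{-N,\dots,N\}^d$ for a mesh parameter $N$ to be tuned, and attach to each center a compactly supported bump $\phi_{\mathbf m}(\mathbf x) := \prod_{j=1}^d \psi\bigl(3N(x_j - m_j/N)\bigr)$, where $\psi$ is the standard univariate triangular hat satisfying $\sum_k \psi(\cdot - k)\equiv 1$. On the support of $\phi_{\mathbf m}$ I would replace $f \in \mathcal W^{\beta,\infty}$ by its degree-$(\beta-1)$ Taylor polynomial $P_{\mathbf m}$ at $\mathbf m/N$; the Sobolev integral remainder yields $|f - P_{\mathbf m}| \leq C(d,\beta)\,N^{-\beta}$ on that support, so the candidate approximator $\widetilde f := \sum_{\mathbf m} \phi_{\mathbf m}\, P_{\mathbf m}$ satisfies $\|\widetilde f - f\|_\infty \leq C\,N^{-\beta}$. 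Choosing $N \asymp \epsilon^{-1/\beta}$ delivers the target accuracy at the continuous level.

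The next step is to realize $\widetilde f$ as a ReLU network. Each univariate hat $\psi$ is a piecewise linear function with finitely many breakpoints and is therefore exactly representable as a shallow ReLU network. All the products, however — the $d$ hats inside $\phi_{\mathbf m}$, the monomials inside $P_{\mathbf m}$, and the outer $\phi_{\mathbf m}\cdot P_{\mathbf m}$ — must be replaced by the approximate multiplication $\widetilde\times$ from Lemma \ref{le:neural}, which achieves accuracy $\delta$ with depth $\mathcal O(\log(1/\delta))$ and constantly many weights via the squaring identity $xy = \tfrac14\bigl((x+y)^2-(x-y)^2\bigr)$. Since both $d$ and $\beta$ are fixed, each patch subnetwork requires only a constant number of chained $\widetilde\times$ operations, yielding depth $\mathcal O(\log(1/\delta))$ and constant width per patch. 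Choosing $\delta$ polynomially smaller than $N^{-\beta}$ keeps the aggregated multiplication error below $\epsilon$, and the identity $\widetilde\times(0,y)=\widetilde\times(x,0)=0$ is what guarantees that patches on which $\phi_{\mathbf m}$ vanishes contribute nothing, preserving the localization structure.

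Placing the $(2N+1)^d \asymp \epsilon^{-d/\beta}$ per-patch subnetworks in parallel and summing their outputs at a final linear layer yields depth $L \leq c(\log(1/\epsilon)+1)$ and total weights $W \leq c\,\epsilon^{-d/\beta}(\log(1/\epsilon)+1)$, matching the claimed bounds. The main obstacle is the simultaneous balancing of three error budgets — the Taylor remainder at scale $N^{-\beta}$, the multiplication error $\delta$ propagated through chained products, and the aggregation across $N^d$ patches — together with verifying that the parallel assembly really defines a single feedforward ReLU network whose weight count is not inflated by spurious cross-patch connections. Once the relationship $\delta \asymp \epsilon \cdot N^{-d}$ (or any safely smaller choice) is fixed, the remaining calculations are routine bookkeeping, and the resulting architecture is exactly the wide (but not yet fully connected) prototype that Theorem \ref{thm:ful} will later narrow and convert into an MLP via Lemma \ref{cor:full}.
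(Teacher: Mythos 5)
Your proposal is correct and follows essentially the same route as the paper's (i.e., Yarotsky's) proof: a triangular-hat partition of unity on a grid of $(2N+1)^d$ patches, local degree-$(\beta-1)$ Taylor polynomials giving error $O(N^{-\beta})$, and chained $\Tilde{\times}$ networks from Lemma~\ref{le:neural} to realize the at most $d+\beta-1$ factors per term, with the property $\Tilde{\times}(x,0)=0$ ensuring only the $2^d$ active patches contribute error. Your choice $\delta \asymp \epsilon N^{-d}$ is more conservative than the paper's $\delta \asymp \epsilon$ (up to constants in $d,\beta$), but since $\log(1/\delta)$ remains $O(\log(1/\epsilon))$ this does not affect the claimed depth or weight bounds.
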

\begin{proof}
\begin{quotation}
    
    Given a $N\in \mathbb{N}_+$. Consider a partition of unity:
    \begin{equation*}
        \sum_\mathbf{m} \phi_\mathbf{m}(\mathbf{x})\equiv 1, \quad \mathbf{x}\in [-1,1]^d.
    \end{equation*}
    which is constructed by a grid of $(2N+1)^d$ function $\phi_\mathbf{m}$ on the domain $[-1,1]^d$

    Here $\mathbf{m}=(m_1,...,m_d), m_i\in \{-N,...-1,0,1,...,N\}, i=1,...,d$, and define the function the function $\phi_\mathbf{m}$ as 
    \begin{equation*}
        \phi_\mathbf{m}(\mathbf{x}) = \prod_{k=1}^d \psi(3N(x_k -\frac{m_k}{N})),
    \end{equation*}
    where
    \begin{equation*}
        \psi(x)=\begin{cases}
            1, & \abs{x}<1,\\
            0, & 2<\abs{x},\\
            2-\abs{x}, & 1\leq \abs{x}\leq 2.
        \end{cases}
    \end{equation*}
    
Note that
\begin{equation*}
    \norm{\psi}_\infty=1 \quad and  \quad \norm{\phi_\mathbf{m}}_\infty=1, \forall \mathbf{m}
\end{equation*}
and
\begin{equation*}
    \operatorname{supp} \phi_\mathbf{m} \subset \{\mathbf{x}:\abs{x_k-\frac{m_k}{N}}<\frac{1}{N}, \forall k\}.
\end{equation*}

For any $\mathbf{m}\in \{-N,...,-1,0,1,...,N\}^d$, consider the degree-($\beta-1$) Taylor polynomial for the function $f$ at $\mathbf{x}=\frac{\mathbf{m}}{N}$:
\begin{equation}\label{eq:P}
    P_\mathbf{m}(\mathbf{x}) = \sum_{{\boldsymbol{\alpha}}:\abs{\boldsymbol{\alpha}}<\beta}\frac{D^{{\boldsymbol{\alpha}}}f}{{\boldsymbol{\alpha}}!}\bigg|_{\mathbf{x} = \frac{\mathbf{m}}{N}}
\left( \mathbf{x} - \frac{\mathbf{m}}{N} \right)^{{\boldsymbol{\alpha}}},
\end{equation}
with ${\boldsymbol{\alpha}}!=\prod_{k=1}^d \alpha_k!$ and $(\mathbf{x}-\frac{\mathbf{m}}{N})^{\boldsymbol{\alpha}} = \prod_{k=1}^d (x_k-\frac{m_k}{N})^{\alpha_k}$. Define an approximation to $f$ by
\begin{equation}\label{eq:f1}
    f_1 = \sum_{\mathbf{m}\in \{-\mathbf{N},...,-1,0,1,...,\mathbf{N}\}} \phi_\mathbf{m} P_\mathbf{m}.
\end{equation}

Using the Taylor expansion of $f$, the approximation error is bounded by:
\begin{equation*}
    \begin{split}
        \abs{f(\mathbf{x})-f_1(\mathbf{x})} 
        &\leq \frac{2^d d^\beta}{\beta!}(\frac{1}{N})^\beta.
    \end{split}
\end{equation*}

$\forall x\in \mathbb{R}$, a ceiling function maps $x$ to the least integer greater than or equal to $x$, denoted $\ceil{x}$. Choose
\begin{equation}\label{eq:N}
    N= \ceil{(\frac{\beta!}{2^d d^\beta}\frac{\epsilon}{2})^{-1/\beta}},
\end{equation}
then
\begin{equation*}
    \norm{f-f_1}_\infty \leq \frac{\epsilon}{2}.
\end{equation*}

By equation (\ref{eq:P}) the coefficients of the polynomials $P_\mathbf{m}$ are uniformly bounded $\forall f \in \mathcal{W}^{\beta,\infty}([-1,1]^d)$:
\begin{equation}\label{eq:Pm}
    P_\mathbf{m}(\mathbf{x})=\sum_{{\boldsymbol{\alpha}}: \abs{\mathfrak{\alpha}}<\beta} a_{\mathbf{m},\boldsymbol{\alpha}}(\mathbf{x}-\frac{\mathbf{m}}{N})^{\boldsymbol{\alpha}}, \quad \abs{a_{\mathbf{m},\boldsymbol{\alpha}}}\leq 1.
\end{equation}

The second step is to construct a network architecture which can approximate any function of the form (\ref{eq:f1}) with uniform error $\frac{\epsilon}{2}$, assuming that $N$ is given by (\ref{eq:N}) and the polynomials $P_\mathbf{m}$ are of the form (\ref{eq:Pm}).

Expand $f_1$ as
\begin{equation*}
        f_1(\mathbf{x}) = \sum_{\mathbf{m}\in \{-N,...,-1,0,1,...,N\}^d} \sum_{{\boldsymbol{\alpha}}:\abs{\alpha}<\beta} a_{\mathbf{m},\boldsymbol{\alpha}} \phi_\mathbf{m}(\mathbf{x})(\mathbf{x}-\frac{\mathbf{m}}{N})^{\boldsymbol{\alpha}}.
\end{equation*}

The expansion is a linear combination of $\phi_\mathbf{m}(\mathbf{x})(\mathbf{x}-\frac{\mathbf{m}}{N})^\mathbf{n}$, and the total number of such terms is less than $d^\beta \prod_{i=1}^d(2N+1)$. Each of them is a product of at most $d+\beta-1$ piece-wise linear univariate factors: total number of $d$ for $\psi(3Nx_k-3m_k)$ functions and at most $\beta-1$ linear expressions of $x_k-\frac{m_k}{N}$. A neural network can approximate such a product according to Lemma \ref{le:neural}. 

Choose $Q=d+\beta$ and $\delta$ (to be chosen later), and use $\Tilde{\times}(\cdot)$ (defined in Lemma \ref{le:neural}) to denote the approximate multiplication and iteratively apply $\Tilde{\times}$ to approximate the product $\phi_\mathbf{m}(\mathbf{x})(\mathbf{x}-\frac{\boldsymbol{m}}{N})^\mathbf{n}$. That is,
\begin{equation}\label{eq:times}
    \Tilde{f}_{\mathbf{m},\boldsymbol{\alpha}}(\mathbf{x}) =\Tilde{\times}(\psi(3Nx_1-3m_1),\Tilde{\times}(\psi(3Nx_2-3m_2),...,\Tilde{\times}(\psi(3Nx_k-3m_k,...)...)).
\end{equation}
Since,
\begin{equation}\label{eq:approximation_error}
    \abs{\Tilde{f}_{\mathbf{m},\boldsymbol{\alpha}}(\mathbf{x})-\phi_\mathbf{m}(\mathbf{x})(\mathbf{x}-\frac{\mathbf{m}}{N})^{\boldsymbol{\alpha}}}\leq (d+\beta-1)\delta
\end{equation}
The full approximation is:
\begin{equation*}
    \Tilde{f}=\sum_{\boldsymbol{m}\in \{-N,...,-1,0,1,...,N\}^d}\sum_{\boldsymbol{\alpha}:\abs{\boldsymbol{\alpha}}<\beta}a_{\mathbf{m},\boldsymbol{\alpha}}\Tilde{f}_{\mathbf{m},\boldsymbol{\alpha}}.
\end{equation*}

There are at most $d^\beta (2N+1)^d$ possible values for $(\mathbf{m},{\boldsymbol{\alpha}})$.

The approximation error of $\Tilde{f}$ is:
\begin{equation*}
    \begin{split}
        \left| \Tilde{f}(\mathbf{x}) - f_1(\mathbf{x}) \right| 
        &\leq  2^d d^\beta (d + \beta-1) \delta.
    \end{split}
\end{equation*}

Choose
\begin{equation*}
    \delta = \frac{\epsilon}{2^{d+1}d^\beta (d+\beta-1)},
\end{equation*}
then $\norm{\Tilde{f}-f_1}_\infty\leq \frac{\epsilon}{2}$. And $\norm{\Tilde{f}-f}_\infty\leq \epsilon$.

\end{quotation}
\end{proof}
Since a full approximation neural network is a linear combination of fewer than $d^\beta \prod_{i=1}^d (2N+1)$ sub-networks, the resulting network width is $H = \mathcal{O}((2N+1)^d) = \mathcal{O}(\epsilon^{-\frac{d}{\beta}})$.

\subsection{Proof of Theorem \ref{thm:ful}}\label{proof:modification}
\begin{proof}
 Based on the proof above, first notice that we can directly represent $\psi(\cdot)$ by a ReLU neural network,
\begin{equation*}
    \psi(x)= \sigma_R(2-\sigma_R(x)-\sigma_R(-x))-\sigma_R(1-\sigma_R(x)-\sigma_R(-x)).
\end{equation*}

Since $\phi_{\mathbf{m}}(\cdot)$ is a product of $\psi(\cdot)$, we can easily approximate it using Lemma \ref{le:neural}. The fundamental constraint is that $a_{\mathbf{m},\boldsymbol{\alpha}}$ admits at most $d^\beta (2N+1)^d$ distinct values. Under this condition, a narrower network architecture can maintain identical approximation error bounds. This is achieved solely through an appropriate reconstruction of $\Tilde{f}_{\mathbf{m},\boldsymbol{\alpha}}(\mathbf{x})$.

Constrained by the bound $W\leq C\cdot H^2L$ for fully connected neural networks, our goal is to construct a neural network with width  $\mathcal{O}(2N+1)^{\frac{d}{2}}$. If $d$ is an even number, the modification is straightforward. For odd $d$, however, a more delicate construction is required. 

The proof is separated into three cases:

\textbf{Case 1: if $d$ is even}, we can write:
\begin{equation*}
\begin{split}
     &\phi_{\mathbf{m}}(\mathbf{x})(\mathbf{x}-\frac{\mathbf{m}}{N})^{{\boldsymbol{\alpha}}}\\
     =& \prod_{k=1}^d \psi(3N(x_k-\frac{m_k}{N}))(x_k-\frac{m_k}{N})^{\alpha_k}\\
     =&\left( \prod_{k=1}^{\frac{d}{2}} \psi(3N(x_k-\frac{m_k}{N}))(x_k-\frac{m_k}{N})^{\alpha_k}\right) \left( \prod_{k=\frac{d}{2}+1}^{d} \psi(3N(x_k-\frac{m_k}{N}))(x_k-\frac{m_k}{N})^{\alpha_k}\right).
\end{split}
\end{equation*}

Denote $\mathbf{m}^1=(m_1,...,m_{\frac{d}{2}}),\mathbf{m}^2 = (m_{\frac{d}{2}+1},...,m_d), {\boldsymbol{\alpha}}^1=(\alpha_1,...,\alpha_{\frac{d}{2}}), {\boldsymbol{\alpha}}^2=(\alpha_{\frac{d}{2}+1},...,\alpha_d)$, $\mathbf{x}^1=(x_1,...,x_{\frac{d}{2}})$, $\mathbf{x}^2=(x_\frac{d}{2},...,x_d)$.

We can rewrite $f_1(\mathbf{x})$, so that
\begin{align*}
        f_1(\mathbf{x}) &= \sum_{\mathbf{m}\in \{-N,...,-1,0,1,...,N\}^d} \sum_{{\boldsymbol{\alpha}}:\abs{\alpha}<\beta} a_{\mathbf{m},\boldsymbol{\alpha}} \phi_\mathbf{m}(\mathbf{x})(\mathbf{x}-\frac{\mathbf{m}}{N})^{\boldsymbol{\alpha}}\\
         &= \sum_{\mathbf{m}^2\in \{-N,...,-1,0,1,...,N\}^{\frac{d}{2}}} \sum_{{\boldsymbol{\alpha}^2}:\abs{\boldsymbol{\alpha}^2}<\beta} \phi_{\mathbf{m}^2}(\mathbf{x}^2)(\mathbf{x}^2-\frac{\mathbf{m}^2}{N})^{\boldsymbol{\alpha}^2}\\
         &\times\left(\sum_{\mathbf{m}^1\in \{-N,...,-1,0,1,...,N\}^{\frac{d}{2}}} \sum_{{\boldsymbol{\alpha}^1}:\abs{\boldsymbol{\alpha}^1}<\beta-\abs{\boldsymbol{\alpha}^2}} a_{\mathbf{m},\boldsymbol{\alpha}} \phi_{\mathbf{m}^1}(\mathbf{x}^1)(\mathbf{x}^1-\frac{\mathbf{m}^1}{N})^{\boldsymbol{\alpha}^1}\right).
\end{align*}

    Using the definition in equation (\ref{eq:times}), we can approximate $ \phi_{\mathbf{m}^1}(\mathbf{x}^1)(\mathbf{x}^1-\frac{\mathbf{m}^1}{N})^{\boldsymbol{\alpha}^1}$ by $\Tilde{f}_{\mathbf{m}^1,{\boldsymbol{\alpha}}^1}$ and $ \phi_{\mathbf{m}^2}(\mathbf{x}^2)(\mathbf{x}^2-\frac{\mathbf{m}^2}{N})^{\boldsymbol{\alpha}^2}$ by $\Tilde{f}_{\mathbf{m}^2,{\boldsymbol{\alpha}}^2}$. From Lemma \ref{le:neural}, $\forall i=1,2, \Tilde{f}_{\mathbf{m}^i,{\boldsymbol{\alpha}}^i}$  can be approximated by a fully connected neural network such that its depth is $\mathcal{O}(\ln(1/\delta)+1)$ and width is $\mathcal{O}(\frac{d}{2})$.

Since $(\mathbf{m}^1,{\boldsymbol{\alpha}}^1)$ and $(\mathbf{m}^2,{\boldsymbol{\alpha}}^2)$ both have at most $d^\beta (2N+1)^{\frac{d}{2}}$ possible values,  with $2*d^\beta (2N+1)^{\frac{d}{2}}$ output units, depth $\mathcal{O}(\ln(1/\delta)+1)$ and width $\mathcal{O}( (2N+1)^{\frac{d}{2}})$, by Lemma \ref{le:neural}, we can use a (fully connected) neural network to represent all possible values for $\Tilde{f}_{\mathbf{m}^i,{\boldsymbol{\alpha}}^i}, i=1,2$.

For any pair of $(\mathbf{m}^2,{\boldsymbol{\alpha}}^2)$, if ${\boldsymbol{\alpha}}^1+\boldsymbol{\alpha}^2>\beta$, we set $a_{\mathbf{m},\boldsymbol{\alpha}}=0$.  By equation (\ref{eq:Pm}), $\abs{a_{\mathbf{m},\boldsymbol{\alpha}}} \leq 1$, which satisfies the definition of the function $\Tilde{\times}(\cdot)$.

Therefore, the full approximation function is redefined as
\begin{equation*}
\begin{split}
      \Tilde{f}&=2^{\frac{d}{2}}d^\beta\sum_{\mathbf{m}^2,\boldsymbol{\alpha}^2} \Tilde{\times}(\frac{1}{2^{\frac{d}{2}}d^\beta}\sum_{{\mathbf{m}^1,\boldsymbol{\alpha}^1}}a_{\mathbf{m},\boldsymbol{\alpha}}\Tilde{f}_{\mathbf{m}^1,\boldsymbol{\alpha}^1}(\mathbf{x}^1),\Tilde{f}_{\mathbf{m}^2,\boldsymbol{\alpha}^2}(\mathbf{x}^2)),
\end{split}
  \end{equation*}
where $\frac{1}{2^{\frac{d}{2}}d^\beta}$ is to ensure the inputs of $\Tilde{\times}(\cdot)$ satisfies the definition.  

By equation (\ref{eq:approximation_error}), for $i=1,2$,
\begin{equation*}
    \abs{\Tilde{f}_{\mathbf{m}^i,\boldsymbol{\alpha}^i}(\mathbf{x}^i)-\phi_{\mathbf{m}^i}(\mathbf{x}^i)(\mathbf{x}^i-\frac{\mathbf{m}^i}{N})^{\boldsymbol{\alpha}^i}}\leq (\frac{d}{2}+\abs{\boldsymbol{\alpha}^i}-1)\delta.
\end{equation*}
Thus,
\begin{equation*}
    \abs{\Tilde{f}(\mathbf{x})-f_1(\mathbf{x})}\leq 2^d d^n(d+\beta)\delta,
\end{equation*}
we choose $\delta = \frac{\epsilon}{2^{d+1}d^{\beta}(d+\beta-1)}$, and $\norm{\Tilde{f}-f}\leq \epsilon$.

By using Lemma \ref{cor:full}, the full approximation neural network is fully connected and has depth $\mathcal{O}(\ln(1/\delta)+1)$, and width at most $\mathcal{O}((2N+1)^{\frac{d}{2}})$. The remainder of the proof follows unchanged. Thus, the weight count scales as $\mathcal{O}((2N+1)^{d}\ln(1/\delta))$.

\textbf{Case 2: if $d$ is odd and $d>1$}, since $d-1$ is even, only the extra dimension requires special treatment.\footnote{We implicitly assume \(N \geq 4\), which is justified since \(N\) becomes large as \(\epsilon\) approaches zero. Therefore, this assumption is made without loss of generality.}

Let $\bar{d}=\frac{d+1}{2}$ denote the central dimension index. Denote $\mathbf{m}^1=(m_1,...,m_{\bar{d}-1})$,$\mathbf{m}^2 = (m_{\bar{d}+1},...,m_d)$, $\boldsymbol{\alpha}^1=(\alpha_1,...,\alpha_{\bar{d}-1})$, $\boldsymbol{\alpha}^2=(\alpha_{\bar{d}+1},...,\alpha_d)$, $\mathbf{x}^1=(x_1,...,x_{\bar{d}-1})$, $\mathbf{x}^2=(x_{\bar{d}+1},...,x_d)$. Similar to the even case, denote $ \Tilde{f}_{\mathbf{m}^1,\boldsymbol{\alpha}^1}(\mathbf{x}^1)$ as the neural network to approximate $f_{\mathbf{m}^1, \boldsymbol{\alpha}^1}(\mathbf{x}^1)\coloneqq\prod_{k=1}^{\bar{d}-1} \psi(3N(x_k-\frac{m_k}{N}))(x_k-\frac{m_k}{N})^{\alpha_k}$ and let $\Tilde{f}_{\mathbf{m}^2,\boldsymbol{\alpha}^2}(\mathbf{x}^2)$ be the neural network to approximate $f_{\mathbf{m}^2, \boldsymbol{\alpha}^2}(\mathbf{x}^2)\coloneqq \prod_{k=\bar{d}+1}^{d} \psi(3N(x_k-\frac{m_k}{N}))(x_k-\frac{m_k}{N})^{\alpha_k}$.

Decompose the function:
\begin{equation*}
\begin{split}
     &\phi_{\mathbf{m}}(\mathbf{x})(\mathbf{x}-\frac{\mathbf{m}}{N})^{\boldsymbol{\alpha}}\\
     =& \prod_{k=1}^d \psi(3N(x_k-\frac{m_k}{N}))(x_k-\frac{m_k}{N})^{\alpha_k}\\
     =& \psi(3N(x_{\bar{d}}-\frac{m_{\bar{d}}}{N}))(x_{\bar{d}}-\frac{m_{\bar{d}}}{N})^{\alpha_{\bar{d}}}f_{\mathbf{m}^1,\boldsymbol{\alpha}^1}(\mathbf{x}^1)f_{\mathbf{m}^2,\boldsymbol{\alpha}^2}(\mathbf{x}^2),
\end{split}
\end{equation*}
since $m_{\bar{d}}\in \{-N,...,-1,0,1,...,N\}$, the key is to separate all $2N+1$ possible values of  $\psi(3N(x_{\bar{d}}-\frac{m_{\bar{d}}}{N}))$ into two parts, each with $\sqrt{2N+1}$ units and the combination of two parts can represent $\psi(3N(x_{\bar{d}}-\frac{m_{\bar{d}}}{N}))$.

Denote $A=\ceil{\sqrt{2N+1}}$. And define sets:
\begin{equation*}
    \begin{split}
        \mathcal{G}_g &= \{-N+i*A+g, i=0,...,A-1\}, g=0,...,A-1.
    \end{split}
\end{equation*}

Each set has $A$ elements. $\forall g=0,...,A-1$, $\forall x_{\bar{d}}\in [-1,1]$, there are at most two $m_{\bar{d}}$, each in separate sets, such that $\psi(3N(x_{\bar{d}}-\frac{m_{\bar{d}}}{N}))>0$.

Also, define
\begin{equation*}
\begin{split}
\mathcal{G}^1&\coloneqq \mathcal{G}_{0}\cup\mathcal{G}_{1}\cup \mathcal{G}_{A-1},\\
   \mathcal{G}^2&\coloneqq \left(\bigcup_{g=2}^{A-2} \mathcal{G}_g \right).
\end{split} 
\end{equation*}

By the construction,
\begin{equation*}
    \begin{split}
        \mathcal{G}^1\cup \mathcal{G}^2 &\supseteq\{-N,...,-1,0,1,...,N\}, \\
         \mathcal{G}_{g'} &\cap \mathcal{G}_g = \emptyset, \quad \forall g,g'=0,...,A-1, g\ne g'.
    \end{split}
\end{equation*}

Define a new function:
\begin{equation*}
    \kappa_b(x_{\bar{d}})=\sum_{j=-N+1+Ab}^{-N+Ab+A-1}\psi(3N(x_{\bar{d}}-\frac{j}{N})), \quad b=0,1,...,A-1.
\end{equation*}

Notice that $j$ doesn't include values from set $\mathcal{G}_0$.

By the construction, $\forall m_{\bar{d}}\in \{-N,...,-1,0,1,...,N\}$, either $ m_{\bar{d}}\in \mathcal{G}^1$, or $ m_{\bar{d}}\in \mathcal{G}^2$. If $ m_{\bar{d}}\in \mathcal{G}^2$, there is a unique $m_{(b,g)}$, such that $m_{\bar{d}}\in [-N+1+Ab,-N+Ab+A-1]$ and $m_{\bar{d}}\in \mathcal{G}_g, g=2,...,A-2$ and $m_{\bar{d}}=m_{(b,g)}$. Notice that there are $(A-3)A$ possible values for $(g,b)$.

Fix a $x_{\bar{d}}\in[-1,1]$, $\exists m \in \{-N,...,-1,0,1,...,N\}$, such that $x_{\bar{d}} \in [\frac{m}{N},\frac{m+1}{N}]$. Then, $\forall m' \in \{-N,...,-1,0,1,...,N\}$, such that $m'\ne m, m'\ne m+1$,
\begin{equation*}
    \psi(3N(x_{\bar{d}}-\frac{m'}{N}))=0.
\end{equation*}

If $m\in \mathcal{G}_g$,
\begin{equation*}
    \psi(3N(x_{\bar{d}}-\frac{m}{N})) = \sum_{i\in \mathcal{G}_g}\psi(3N(x_{\bar{d}}-\frac{i}{N})).
\end{equation*}

If $m+1\in \mathcal{G}_g$,
\begin{equation*}
    \psi(3N(x_{\bar{d}}-\frac{m+1}{N})) = \sum_{i\in \mathcal{G}_g}\psi(3N(x_{\bar{d}}-\frac{i}{N})).
\end{equation*}

If $m,m+1\notin \mathcal{G}_g$,
\begin{equation*}
     \sum_{i\in \mathcal{G}_g}\psi(3N(x_{\bar{d}}-\frac{i}{N}))=0.
\end{equation*}

There are four possible situations. 

First, if $m,m+1\in \mathcal{G}^2$, then $\exists b \in \{0,1,...,A-1\}$,  such that $m,m+1 \in \{-N+1+Ab,...,N+Ab+A-1\}$ and
\begin{equation*}
\begin{split}
    \kappa_b(x_{\bar{d}})&=1,\\
     \kappa_{b'}(x_{\bar{d}})&=0, \quad b'\in  \{0,1,...,A-1\}, b'\ne b.
\end{split}
\end{equation*}

Then,  $\forall b \in \{0,1,...,A-1\}$,
\begin{equation*}
    \begin{split}
        \psi(3N(x_{\bar{d}}-\frac{m_{(b,g)}}{N}))&=\kappa_b(x_{\bar{d}})\sum_{i\in\mathcal{G}_g}\psi(3N(x_{\bar{d}}-\frac{i}{N})), \quad  g =2,...,A-2,\\
        \psi(3N(x_{\bar{d}}-\frac{m_{\bar{d}}}{N}))&=0, \quad \forall m_{\bar{d}}\in \mathcal{G}^1,\\
        \psi(3N(x_{\bar{d}}-\frac{m_{(b,g)}}{N}))&=0, \quad \forall m_{(b,g)}\in\mathcal{G}^2, m_{(b,g)}\ne m,m+1.
    \end{split}
\end{equation*}

For the second situation, if $m,m+1 \in \mathcal{G}^1$, then, $\sum_{i\in\mathcal{G}_g}\psi(3N(x_{\bar{d}}-\frac{i}{N}))=0, g =2,...,A-2$. $\forall b \in \{0,1,...,A-1\}$,
\begin{equation*}
    \begin{split}
        \psi(3N(x_{\bar{d}}-\frac{m_{(b,g)}}{N}))&=\kappa_b(x_{\bar{d}})\sum_{i\in\mathcal{G}_g}\psi(3N(x_{\bar{d}}-\frac{i}{N}))=0, \quad  g =2,...,A-2,\\
        \psi(3N(x_{\bar{d}}-\frac{m_{\bar{d}}}{N}))&= \psi(3N(x_{\bar{d}}-\frac{m_{\bar{d}}}{N})), \quad \forall m_{\bar{d}}\in \mathcal{G}^1.
    \end{split}
\end{equation*}

For the third situation, if $m\in \mathcal{G}^2, m+1\in \mathcal{G}^1$, then $\exists b \in \{0,1,...,A-1\}$,  such that $m,m+1 \in \{-N+1+Ab,...,-N+Ab+A-1\}$ and
\begin{equation*}
\begin{split}
    \kappa_b(x_{\bar{d}})&=1,\\
     \kappa_{b'}(x_{\bar{d}})&=0, \quad b'\in  \{0,1,...,A-1\}, b'\ne b.
\end{split}
\end{equation*}

Then,  $\forall b \in \{0,1,...,A-1\}$,
\begin{equation*}
    \begin{split}
       \psi(3N(x_{\bar{d}}-\frac{m_{(b,g)}}{N}))&=\kappa_b(x_{\bar{d}})\sum_{i\in\mathcal{G}_g}\psi(3N(x_{\bar{d}}-\frac{i}{N})), \quad  g =2,...,A-2,\\
        \psi(3N(x_{\bar{d}}-\frac{m_{\bar{d}}}{N}))&=\psi(3N(x_{\bar{d}}-\frac{m_{\bar{d}}}{N})), \quad \forall m_{\bar{d}}\in \mathcal{G}^1,\\
        \psi(3N(x_{\bar{d}}-\frac{m_{(b,g)}}{N}))&=0, \quad \forall m_{(b,g)}\in\mathcal{G}^2, m_{(b,g)}\ne m,\\
        \psi(3N(x_{\bar{d}}-\frac{m_{\bar{d}}}{N}))&=0, \quad \forall m_{\bar{d}}\in \mathcal{G}^1, m_{\bar{d}}\ne m+1.\\
    \end{split}
\end{equation*}

For the last situation, if $m\in \mathcal{G}^1, m+1\in \mathcal{G}^2$, $\forall b \in \{0,1,...,A-1\}$,
\begin{equation*}
    \begin{split}
       \psi(3N(x_{\bar{d}}-\frac{m_{(b,g)}}{N}))&=\kappa_b(x_{\bar{d}})\sum_{i\in\mathcal{G}_g}\psi(3N(x_{\bar{d}}-\frac{i}{N})), \quad  g =2,...,A-2,\\
        \psi(3N(x_{\bar{d}}-\frac{m_{\bar{d}}}{N}))&=\psi(3N(x_{\bar{d}}-\frac{m_{\bar{d}}}{N})), \quad \forall m_{\bar{d}}\in \mathcal{G}^1,\\
        \psi(3N(x_{\bar{d}}-\frac{m_{(b,g)}}{N}))&=0, \quad \forall m_{(b,g)}\in\mathcal{G}^2, m_{(b,g)}\ne m+1,\\
        \psi(3N(x_{\bar{d}}-\frac{m_{\bar{d}}}{N}))&=0, \quad \forall m_{\bar{d}}\in \mathcal{G}^1, m_{\bar{d}}\ne m.\\
    \end{split}
\end{equation*}

In conclusion, if  $m_{\bar{d}}\in \mathcal{G}^1$,
\begin{equation*}
\begin{split}
     &\phi_{\mathbf{m}}(\mathbf{x})(\mathbf{x}-\frac{\mathbf{m}}{N})^{\boldsymbol{\alpha}}\\
     &=\psi(3N(x_{\bar{d}}-\frac{m_{\bar{d}}}{N}))(x_{\bar{d}}-\frac{m_{\bar{d}}}{N})^{\alpha_{\bar{d}}}\Tilde{f}_{\mathbf{m}^1,\boldsymbol{\alpha}^1}(\mathbf{x}^1)\Tilde{f}_{\mathbf{m}^2,\boldsymbol{\alpha}^2}(\mathbf{x}^2), 
\end{split}
\end{equation*}
if $m_{\bar{d}}\in \mathcal{G}^2, m_{\bar{d}}= m_{(b,g)}$,
\begin{equation*}
\begin{split}
     &\phi_{\mathbf{m}}(\mathbf{x})(\mathbf{x}-\frac{\mathbf{m}}{N})^{\boldsymbol{\alpha}}\\
     &= \psi(3N(x_{\bar{d}}-\frac{m_{\bar{d}}}{N}))(x_{\bar{d}}-\frac{m_{\bar{d}}}{N})^{\alpha_{\bar{d}}}\Tilde{f}_{\mathbf{m}^1,\boldsymbol{\alpha}^1}(\mathbf{x}^1)\Tilde{f}_{\mathbf{m}^2,\boldsymbol{\alpha}^2}(\mathbf{x}^2)\\
     &=\left(\sum_{i\in\mathcal{G}_g}\psi(3N(x_{\bar{d}}-\frac{i}{N}))(x_{\bar{d}}-\frac{i}{N})^{\alpha_{\bar{d}}}\Tilde{f}_{\mathbf{m}^1,\boldsymbol{\alpha}^1}(\mathbf{x}^1)\right)\left(\Tilde{f}_{\mathbf{m}^2,\boldsymbol{\alpha}^2}(\mathbf{x}^2)\kappa_b(x_{\bar{d}})\right),
\end{split}
\end{equation*}

Adjusting the definition,
\begin{equation*}
    \begin{split}\Tilde{f}_{\mathbf{m}^1,\boldsymbol{\alpha}^1,g,\alpha_{\bar{d}}}(\mathbf{x}^1,x_{\bar{d}})&\coloneqq\Tilde{\times}\left(\sum_{i\in\mathcal{G}_g}\Tilde{\times}(\psi(3N(x_{\bar{d}}-\frac{i}{N})),\Tilde{\times}(x_{\bar{d}}-\frac{i}{N},\dots )),\Tilde{f}_{\mathbf{m}^1,\boldsymbol{\alpha}^1}(\mathbf{x}^1)\right),\\
\Tilde{f}_{\mathbf{m}^2,\boldsymbol{\alpha}^2,b}(\mathbf{x}^2,x_{\bar{d}})&\coloneqq\Tilde{\times}\left(\Tilde{f}_{\mathbf{m}^2,\boldsymbol{\alpha}^2}(\mathbf{x}^2),\kappa_b(x_{\bar{d}})\right).
    \end{split}
\end{equation*}

For each neural network, with at most $(d-1)^\beta (2N+1)^{\frac{d-1}{2}}\ceil{\sqrt{2N+1}}$ output units, depth $\mathcal{O}(\ln(1/\delta)+1)$ and width $\mathcal{O}((2N+1)^{\frac{d}{2}})$, by Lemma \ref{le:neural}, we can use a neural network to represent all possible values.

For each $(\mathbf{m}^2,\boldsymbol{\alpha}^2)$ and $(\mathbf{m}^1,\boldsymbol{\alpha}^1)$, if $m_{\bar{d}}\in \mathcal{G}^2$, let $a_{\mathbf{m,\boldsymbol{\alpha}},(b,g),\alpha_{\bar{d}}}$ denote $a_{\mathbf{m},\boldsymbol{\alpha}}$ in the case $m_{\bar{d}}=m_{(b,g)}$; if $m_{\bar{d}}\in \mathcal{G}^1$, let $a_{\mathbf{m,\boldsymbol{\alpha}}}$ denote $a_{\mathbf{m},\boldsymbol{\alpha}}$.  We set $a_{\mathbf{m,\boldsymbol{\alpha}},(b,g),\alpha_{\bar{d}}}=0$ if it doesn't satisfy the definition. We can rewrite $f_1(\mathbf{x})$ as
\begin{equation*}\label{eq:full}
\begin{split}
   f_1(\mathbf{x}) = 
    \sum_{\mathbf{m}^2, \boldsymbol{\alpha}^2, b} 
    \Bigg( &
            \sum_{\mathbf{m}^1, \boldsymbol{\alpha}^1,g \in \mathcal{G}^2,\alpha_{\bar{d}}} 
            a_{\mathbf{m}^1, \boldsymbol{\alpha}^1,(b,g),\alpha_{\bar{d}}}
            f_{\mathbf{m}^1, \boldsymbol{\alpha}^1, g,\alpha_{\bar{d}}}(\mathbf{x}^1, x_{\bar{d}}) \Bigg)\\
       &\times f_{\mathbf{m}^2, \boldsymbol{\alpha}^2, b}(\mathbf{x}^2, x_{\bar{d}})
     \\
    \quad \displaystyle+
    \sum_{\mathbf{m}^2, \boldsymbol{\alpha}^2} 
            \Bigg(&
            \sum_{{\mathbf{m}^1, \boldsymbol{\alpha}^1, m_{\bar{d}} \in \mathcal{G}^1},\alpha_{\bar{d}}} 
            a_{\mathbf{m}, \boldsymbol{\alpha}} 
            \left(x_{\bar{d}} - \frac{m_{\bar{d}}}{N} \right)^{\alpha_{\bar{d}}}
            \psi\left( 3N\left(x_{\bar{d}} - \frac{m_{\bar{d}}}{N} \right) \right)
            f_{\mathbf{m}^1, \boldsymbol{\alpha}^1}(\mathbf{x}^1)\Bigg)\\
        &\times f_{\mathbf{m}^2, \boldsymbol{\alpha}^2}(\mathbf{x}^2).
\end{split}
\end{equation*}

Define the full approximation by the following:
\begin{equation*}\label{eq:full}
\begin{split}
   \Tilde{f} = 2^{\frac{d}{2}} d^\beta 
    \sum_{\mathbf{m}^2, \boldsymbol{\alpha}^2, b} 
    \Tilde{\times}\Bigg( &
            \frac{1}{2^{\frac{d}{2}} d^\beta}
            \sum_{\mathbf{m}^1, \boldsymbol{\alpha}^1,g \in \mathcal{G}^2,\alpha_{\bar{d}}} 
            a_{\mathbf{m}^1, \boldsymbol{\alpha}^1,(b,g),\alpha_{\bar{d}}}
            \Tilde{f}_{\mathbf{m}^1, \boldsymbol{\alpha}^1, g,\alpha_{\bar{d}}}(\mathbf{x}^1, x_{\bar{d}}), \\
       & \Tilde{f}_{\mathbf{m}^2, \boldsymbol{\alpha}^2, b}(\mathbf{x}^2, x_{\bar{d}})
    \Bigg) \\
    \quad \displaystyle+ 2^{\frac{d}{2}} d^\beta 
    \sum_{\mathbf{m}^2, \boldsymbol{\alpha}^2} 
    \Tilde{\times}\Bigg( 
        &\Tilde{f}_{\mathbf{m}^2, \boldsymbol{\alpha}^2}(\mathbf{x}^2),\\
         \frac{1}{2^{\frac{d}{2}} d^\beta}
            \sum_{{\mathbf{m}^1, \boldsymbol{\alpha}^1, m_{\bar{d}} \in \mathcal{G}^1},\alpha_{\bar{d}}} &
            a_{\mathbf{m}, \boldsymbol{\alpha}} 
            \Tilde{\times}\bigg(\Tilde{f}_{\mathbf{m}^1, \boldsymbol{\alpha}^1}(\mathbf{x}^1),\Tilde{\times}\left(\psi(3N(x_{\bar{d}}-\frac{m_{\bar{d}}}{N})),\Tilde{\times}(x_{\bar{d}}-\frac{m_{\bar{d}}}{N},\dots )\right)\bigg)
    \Bigg).
\end{split}
\end{equation*}

For the neural network in the first and second lines, the depth is $\mathcal{O}(\ln(1/\delta)+1)$ and width is $\mathcal{O}((2N+1)^{\frac{d}{2}})$; since $\mathcal{G}^1$ has at most $3 \ceil{\sqrt{2N+1}}$ elements, we need neural networks in the third and fourth lines to have depth $\mathcal{O}(\ln(1/\delta)+1)$ and width $\mathcal{O}( (2N+1)^{\frac{d}{2}})$. To construct the first line's neural network, we need one more layer of composition $\Tilde{\times}(\cdot)$. Thus, we must adjust the term $d+\beta-1$ in Appendix \ref{proof:error} into $d+\beta$. That is, we choose $\delta = \frac{\epsilon}{2^{d+1}d^\beta (d+\beta)\delta}$. The rest of the proof remains the same as in Appendix \ref{proof:error}.

\textbf{Case 3: if $d=1$}, we need a special treatment.\footnote{Although there is no theoretical advantage in using deep neural networks when $d=1$, we include a brief proof for completeness.} The method used above requires dividing $[-1,1]$ into $N$ grids and $N$ different functions of $\psi(x)$, the width is $\mathcal{O}(N)$. Since $d\geq 2$, the largest width is $\mathcal{O}(N^{\frac{d}{2}})$, this approach is applicable to the previous cases. However, when $d=1$, we need the largest width to be $\mathcal{O}(\sqrt{N})$. We will follow the notation used above.

First, notice that the function $\kappa_b(x)$ requires a width $\mathcal{O}(N)$; we first modify this function.

Define a new function, $\forall b=0,1,\dots,A-1$,
\begin{equation*}
        \Psi_b(x)=\begin{cases}
            1, & x\in [-N+Ab+1,-N+Ab+A-1],\\
            \psi(3N(x-\frac{-N+Ab+1}{N})), & x\in [-N+Ab,-N+Ab+1],\\
            \psi(3N(x-\frac{-N+Ab+A-1}{N})), & x\in [-N+Ab+A-1,-N+Ab+A],\\
            0, &\text{otherwise}.
        \end{cases}
    \end{equation*}
This function can be directly represented by a ReLU neural network with finite width and depth. Notice that

\begin{equation*}
    \Psi_b(x)=\kappa_b(x).
\end{equation*}
To represent all possible values of $\Psi_b(x_{\bar{d}})$, we only need a neural network with finite depth and width $\mathcal{O}(\sqrt{N})$.

$\forall m'\in \mathcal{G}^2, \exists m \in [-N+2,-N+A-2], \exists b=0,1,\dots,A-1$, such that $m' = m+b$. We denote $a_{m',\alpha}$ as $a_{m,b,\alpha}$. We notice that $f_1(x)$ can be written as:
\begin{equation*}
\begin{split}
    f_1(x) &= \sum_b\sum_\alpha \sum_{m=-N+2}^{-N+A-2}  a_{m,b,\alpha} \Psi_b(x)\cdot \psi( x-\frac{m+bA}{N})( x-\frac{m+bA}{N})^{\alpha}\\
    &+\sum_\alpha \sum_{m\in \mathcal{G}^1} a_{m,\alpha} \psi( x-\frac{m}{N})(x-\frac{m}{N})^{\alpha}\\
    &= \sum_b  \Psi_b(x)\cdot  \sum_{m=-N+2}^{-N+A-2}\sum_\alpha  a_{m,b,\alpha}\psi\left(-\frac{m}{N}+\sum_{b'} \Psi_{b'}(x)\cdot (x-\frac{{b'}A}{N})\right)\\
    &\cdot( -\frac{m}{N} +\sum_{b'} \Psi_{b'}(x)\cdot (x-\frac{{b'}A}{N}))^{\alpha}\\
    &+\sum_\alpha \sum_{m\in \mathcal{G}^1} a_{m,\alpha} \psi( x-\frac{m}{N})(x-\frac{m}{N})^{\alpha}
\end{split}
\end{equation*}

Define the full approximation by the following:
\begin{equation*}
    \begin{split}
    \Tilde{f} &=  2d^{\beta} \sum_b \Tilde{\times}\Biggl( \Psi_b(x),\\
    &\frac{1}{2d^{\beta}}\sum_{m=-N+2}^{-N+A-2}\sum_\alpha  a_{m,b,\alpha} \Tilde{\times}\biggl(\psi\Bigl(-\frac{m}{N}+ \sum_{b'} \Tilde{\times}\bigl( \Psi_{b'}(x),x-\frac{{b'}A}{N}\bigr)\Bigr),\\
    &\Tilde{\times}\Bigl( -\frac{m}{N}+ \sum_{b'} \Tilde{\times}\bigl( \Psi_{b'}(x),x-\frac{{b'}A}{N}\bigr),\dots\Bigr)\biggr)\Biggr)\\
    &+\sum_\alpha \sum_{m\in \mathcal{G}^1} a_{m,\alpha}  \Tilde{\times}\left(\Psi_b(x), \Tilde{\times}\Bigl(\psi( x-\frac{m}{N}),\Tilde{\times}( x-\frac{m}{N},\dots)\Bigr)\right)
\end{split}
\end{equation*}
The width needed for this approximation is at most $\mathcal{O}(\sqrt{N})$ and depth $\mathcal{O}(\ln(1/\delta)+1)$. The rest of the proof remains the same as above.

In conclusion, there is a ReLU network that has depth $c(\ln(1/\epsilon)+1)$, width at most $c\epsilon^{-\frac{d}{2\beta}}$, and weights at most $c\epsilon^{-\frac{d}{\beta}}(\ln(1/\epsilon)+1)$, with some constant $c=c(d,\beta)$. Lemma \ref{cor:full} shows that this network can be fully connected with the same level of depth, width, and weights. The only difference is the constant term $c$.

\end{proof} \end{appendix}

\bibliographystyle{apalike}
\bibliography{bib}  

\end{document}